\def\BibTeX{{\rm B\kern-.05em{\sc i\kern-.025em b}\kern-.08em
		T\kern-.1667em\lower.7ex\hbox{E}\kern-.125emX}}
\newtheorem{thm}{Theorem}
\definecolor{azure(colorwheel)}{rgb}{0.0, 0.5, 1.0}
\def\endthebibliography{%
	\def\@noitemerr{\@latex@warning{Empty `thebibliography' environment}}%
	\endlist
}
\newcommand{\nosemic}{\renewcommand{\@endalgocfline}{\relax}}
\newcommand{\dosemic}{\renewcommand{\@endalgocfline}{\algocf@endline}}
\let\oldnl\nl
\newcommand{\nonl}{\renewcommand{\nl}{\let\nl\oldnl}}
\newcommand*{\myprime}{^{\prime}\mkern-1.2mu}
\begin{document}
	\title{AoI-Aware Resource Allocation for Platoon-Based C-V2X Networks via Multi-Agent Multi-Task Reinforcement Learning}
	\author{Mohammad Parvini, \textit{Student Member, IEEE},~Mohammad Reza Javan,~\textit{Senior Member, IEEE},~Nader Mokari,~\textit{Senior Member, IEEE},~Bijan Abbasi,~\textit{Senior Member, IEEE}, and  Eduard A. Jorswieck, \textit{Fellow, IEEE}
	\thanks{M. Parvini,~ N. Mokari, and B. Abbasi are with the Department of Electrical and Computer Engineering,~Tarbiat Modares University,~Tehran,~Iran, (e-mails: \{m.parvini, nader.mokari, abbasi\}@modares.ac.ir). M.~R.~Javan is with the Department of Electrical and Computer Engineering, Shahrood University of Technology, Iran, (e-mail: javan@shahroodut.ac.ir).~Eduard A. Jorswieck is with Institute for Communications Technology, TU
	Braunschweig, Germany, Email: jorswieck@ifn.ing.tu-bs.de.}}
	\maketitle

	\begin{abstract}
		
		This paper investigates the problem of age of information (AoI) aware radio resource management for a platooning system. Multiple autonomous platoons exploit the cellular wireless vehicle-to-everything (C-V2X) communication technology to disseminate the cooperative awareness messages (CAMs) to their followers while ensuring timely delivery of safety-critical messages to the Road-Side Unit (RSU). Due to the challenges of dynamic channel conditions, centralized resource management schemes that require global information are inefficient and lead to large signaling overheads. Hence, we exploit a distributed resource allocation framework based on multi-agent reinforcement learning (MARL), where each platoon leader (PL) acts as an agent and interacts with the environment to learn its optimal policy. Existing MARL algorithms consider a holistic reward function for the group's collective success, which often ends up with unsatisfactory results and cannot guarantee an optimal policy for each agent. Consequently, motivated by the existing literature in RL, we propose a novel MARL framework that trains two critics with the following goals: A global critic which estimates the global expected reward and motivates the agents toward a cooperating behavior and an exclusive local critic for each agent that estimates the local individual reward. Furthermore, based on the tasks each agent has to accomplish, the individual reward of each agent is decomposed into multiple sub-reward functions where task-wise value functions are learned separately. Numerical results indicate our proposed algorithm's effectiveness compared with the conventional RL methods applied in this area.
		\\
		\emph{\textbf{Index Terms---}} Resource management, V2X, AoI, Platoon cooperation,  MARL.
	\end{abstract}
	
	\section{Introduction}
	\lettrine[lraise=0, findent=0.1em, nindent=0.2em, slope=-.5em, lines=2]{I}{ntelligent} 
	 transportation systems (ITSs) will become a compulsory component of the future's smart cities. In essence, ITSs will address the issue of dense traffic networks and transportation bottlenecks by exploiting efficient traffic management approaches\cite{zhang2011data}. One of the foreseen services of ITS is the so-called autonomous vehicular platoon system\cite{hall2005vehicle}. Platooning is the first step toward fully autonomous driving, which is deemed one of the most representative potentials for overcoming the transport costs. Furthermore, platooning improves the intersection's operational efficiency compared to the case where cars cross the intersection one after another\cite{lioris2017platoons}.	 
	 In summary, a vehicle platoon is a convoy of interconnected vehicles that continuously coordinate their kinetics and share a typical moving pattern. In each platoon formation, the head-of-line vehicle is known as the Platoon Leader (PL), which is responsible for maintaining communication with other Platoon Members (PMs)\cite{jia2015survey}. In order to reap the benefits of the platooning system properly, several critical issues must be tackled. Firstly, each vehicle in the platoon must have enough awareness of its relative distance and velocity with its surrounding vehicles. This perception is needed to allow the vehicles in a platoon to regulate their decisions and to guarantee that any perturbation in the position or velocity of PL does not lead to amplified fluctuations in the behavior of PMs. This balance, known as the string stability, is ensured through the timely exchange of cooperative awareness messages (CAMs) among the vehicles of the platoon, and it is regularly initiated by the PL that manages the group\cite{ETSI_CAM_STANDARD}. 
	 Then,  every platoon must have sufficient information about the other existing platoons and vehicles in the network, especially in the case of intersections or road curves. 
	 These points reflect the importance of investigating an efficient resource allocation algorithm that meets the requirements of both inter-platoon and intra-platoon communications\cite{rios2016survey}.
	 
	 The advent of vehicle-to-everything (V2X) communication technology has addressed the aforementioned challenges. Platoons communicate with each other through the Road-Side Unit (RSU) with Vehicle-to-Infrastructure (V2I) communications in order to exchange the intersection safety messages, while vehicles in the same platoon exchange safety-critical messages by either broadcasting or cellular vehicle-to-vehicle (V2V) communications for CAM dissemination. The more frequently information is exchanged in the system, the sooner each platoon member can react and avoid prospective obstacles\cite{sensors_platoon_v2x}. 
	 The theoretical potential of Long Term Evolution (LTE) for V2X communications has been appraised in the Third Generation Partnership Project (3GPP) studies\cite{3gpp_36_885}. In LTE systems, eNodeBs centrally perform radio resource management (RRM). However, the conventional LTE architecture does not natively sustain direct V2V communications. Since LTE Release 12, 3GPP has provided several technical specifications to mitigate this problem through device-to-device (D2D) sidelink communications (also known as Proximity Services)\cite{3gpp_23_303,3gpp_36_785}. Furthermore, new requirements and use cases have been proposed for 5G V2X enhancements in Release 15 \cite{3gpp_22_886}. Following the existing literature, this paper is based on Mode 4, defined in the 3GPP cellular V2X architecture \cite{molina2017lte}. The resource scheduling and interference management between the platoons are established based on distributed algorithms implemented between the vehicles \cite{molina2017lte,LTE-V}.
	 
	 \subsection{Related Works}
	 Recently, the platooning system has been considered in various studies. 
	 The authors of \cite{Better_Platooning} analyze the capability of the LTE system in establishing intra-platoon communication.
	 In \cite{mei2018joint}, the authors study the reliability and efficiency of the platoon-based V2V communication, investigate the string stability requirements for the platooning systems and design a CAM dissemination mechanism in the LTE-V2V network. The authors of \cite{wang2019platoon} investigate the platoon cooperation in a multi-lane scenario and consider a two-step resource allocation along with developing a dynamic programming based subchannel allocation and power control algorithm to maximize the platoon size as well as to minimize the power consumption. In \cite{zeng2019joint}, string stability of the platoons and the maximum wireless system delay that guarantees the stability are analyzed. The resource allocation based on the evolved multimedia broadcast multicast services (eMBMS) capability and D2D communications is examined in \cite{peng2017resource} to enhance the reliability and reduce the transmission latency in a scenario with a chain of platoons. A two-stage platoon formation algorithm and a time division based intra-platoon resource allocation mechanism are introduced to develop stable platoons in \cite{wang2018resource}. 
	 Most of the issues that have been addressed in the articles mentioned above are related to the platoon's communications and interactions with each other or controlling algorithms employed to ensure the platoon's string stability. Nonetheless, an essential common concern that has not yet been elucidated is the fast-changing channel condition in vehicular environments that provoke uncertainty and inaccuracy in estimating the channel state information (CSI). On the other hand, the gradual increase in users' number leads to more complicated optimization problems with often nonlinear constraints, making them challenging to optimize by traditional optimization methods. The aforementioned hurdles call for investigating novel methods that can deal with more complex situations efficiently. 
	 
	 As one of the robust machine learning tools, reinforcement learning (RL) has recently attracted substantial attention. In \cite{zia2019distributed}, the authors analyze the spectrum allocation scheme by devising a distributed Q-learning approach, where autonomous D2D users try to maximize their throughput and minimize their interference to cellular users. Furthermore, an intelligent resource management problem in the Internet of Vehicles (IoV) networks is analyzed in \cite{yang2019intelligent} using an actor-critic RL method. However, the RL methods applied in the above works are suitable in low-dimensional state and action spaces. 
	 RL in combination with deep learning has led to the emergence of deep reinforcement learning (DRL)\cite{sun2019application}. DRL has sparked a flurry of interest and has found its way into vehicular network literatures\cite{V2X_Modeselection_DRL, Multi_Agent_DRL_Urban}. The authors of \cite{ye2019deep} propose a decentralized resource allocation method in a vehicular network for both unicast and broadcast scenarios employing DRL. In \cite{An_Intelligent_Path_Planning_DRL}, a mobile edge computing-based platooning system has been proposed in which the platoons locate their optimal path through RL. The authors of \cite{vu2020multi} investigate the problem of channel assignment and power allocation in a platooning vehicular network using the DRL approach. In a similar framework, spectrum and energy efficiency of vehicular platooning network is examined in \cite{liu2020joint}. In \cite{liang2019spectrum}, the authors investigate the spectrum sharing in a vehicular network by implementing a multi-agent DRL method. In order to tackle the problem of the environment's non-stationarity, the authors propose a fingerprint method that incorporates agents' policies in the observation space. Spectrum allocation for D2D communication is investigated in \cite{li2019multi} in which the authors propose a multi-agent actor-critic method. 
	 We can summarize the deficiencies of the works mentioned above as follows:  \cite{ye2019deep, An_Intelligent_Path_Planning_DRL} and \cite{liu2020joint} model the policy search as a Markov decision process (MDP), which means that all the agents update their policies independently. However, although these algorithms are capable of handling many complex problems, they cannot be applied to multi-agent systems (MASs). In MASs, all the agents act simultaneously and affect the environment, leading to a non-stationary environment \cite{feriani2021single}. On the other hand, \cite{vu2020multi} and \cite{liang2019spectrum} are based on multi-agent DRL.
	 DRL methods employ discrete action spaces which is not preferable in power control scenarios leading to poor results.
	 A widely applied MARL framework is multi-agent deep deterministic policy gradient (MADDPG) \cite{li2019multi}. MADDPG is based on centralized training and decentralized execution in which each agent collects the information of other agents during the training time and then executes actions independently based on its observation.  However, the Achilles heel of this method is that the critic's input grows linearly with the number of agents. Furthermore, although these algorithms reach an optimal solution, there is no explicit notion of coordination between the agents.
	 
	In vehicular networks, the traffic and intersection safety information is time-critical, and hence acquiring timely, and fresh traffic updates are of significant importance. Recently, an emerging new metric has been employed for capturing the timeliness of the information, namely the age of information (AoI)\cite{kaul2011AoI}. By definition, AoI is the time elapsed since the most recent received information update (from RSU point of view) was generated (at the corresponding platoon). Unlike traditional metrics such as delay, AoI only takes the information that delivers fresh updates to the RSU into account\cite{ultra_aoi_vehicular}. One of the recent works in this area is \cite{chen2020age} where the authors formulate an AoI-aware radio resource management problem in a Manhattan grid V2V network. 
	\subsection{Contribution}
	This work considers the AoI minimization problem in high mobility vehicular platooning system, consisting of multiple connected and autonomous vehicles where PLs attempt to access the frequency spectrum to disseminate the CAM messages between their followers while keeping an updated connection with the RSU. The novelty of this work lies in the following key contributions:
	\begin{itemize}
	\item We formulate a multi-objective optimization problem for each platoon to jointly minimize the AoI and maximize the CAM message transmission probability.
	\item We model the spectrum access of the multiple PLs as a multi-agent problem and exploit the recent progress of MARL structures in \cite{sheikh2020multi} to build a novel MARL framework on top of deterministic policy gradients architectures which trains two critics: A global critic which estimates the global expected reward and motivates collaboration between multiple agents, and an exclusive local critic for each agent that estimates the local expected reward.
	\item In order to tackle the problem of the overestimation bias in Q-functions, we exploit the Twin Delayed Deep Deterministic Policy Gradient (TD3) algorithm \cite{fujimoto2018addressing} for the global critic.
	\item Furthermore, by treating each sub-objective as a separate task, the individual reward of each agent is decomposed into multiple sub-reward functions where task-wise value functions are learned separately.
	\item Numerical experiments indicate that the proposed framework converges 3 times faster than the conventional RL frameworks and maintains the average AoI quantity within 5-10 milliseconds range, and guarantees a CAM message transmission probability of over 99 \% for various platoon sizes.
	\end{itemize}

	\begin{table}[!t]
		\centering
		\caption{Primary Notations used in the paper}
		\label{Table_Not}
		{\renewcommand{\arraystretch}{1.2}
			\begin{tabular}{{@{}lll@{}}}
				\toprule[1pt]
				\textbf{Notation} & \textbf{Definition}  \\
				\midrule
				$ \mathbb{N} $ & the set natural numbers\\
				$ P/\mathcal{P}/j $    & number/set/index of platoons  \\
				$ N_j/\mathcal{N}_j/n $ & number/set/index of vehicles in platoon $ j $\\
				$ K/\mathcal{K}/k $ & number/set/index of subchannels\\
				$ \alpha_j $ & frequency independent large-scale fading\\
				$ g_j[k] $ & frequency dependent small-scale fading \\
				$ \Re $ & RSU location \\
				$ \beta_{j,k}^t $ & subchannel allocation indicator \\
				$ \theta_{j}^t $ & inter/intra-platoon mode selection indicator \\
				$ \mathcal{C}_{j,\Re}^{t}[k] $ & data rate between PL $ j $ and the RSU in subchannel $ k $\\
				$ h_{j,\Re}[k] $ & channel gain from PL $ j $ to RSU in subchannel $ k $ \\
				$ \mathcal{C}_{j,i}^{t}[k] $ &  data rate between PL $ j $ and its follower $ i\in\mathcal{N}_j $ \\
				$ h_{j,i}[k] $ & channel gain from PL $ j $ to its PMs in subchannel $ k $ \\
				$ p^t_{j}[k] $ & power usage of PL $ j $ \\
				$ A_j^t $ & AoI of PL $ j $ up to the beginning of scheduling slot $ t $ \\
				$ \zeta_j $ & CAM messages size of PL $ j $\\
				$ \mathcal{C}_{j,\Re}^{\text{min}} $ & minimum capacity requirement of PL\\
				\bottomrule
		\end{tabular}}
	\end{table}
	\subsection{{Paper Organization and Notations}}
	The remainder of the paper is arranged as follows. 
	In Section \ref{S_M}, we discuss the proposed system model. Section \ref{MARLFRAMEWORK} describes the multi-agent reinforcement learning algorithm. In Section \ref{Result}, we present the simulation results and analyses, and finally, Section \ref{Conclusion} concludes the paper.
	
	\textit{Notations:}
	Most of the notations applied in this paper are standard. 
	To ease readability, all the primary notations of the paper are listed in Table \ref{Table_Not}.
	\section{System Model and Problem Formulation}\label{S_M}
	
	\begin{figure}[!t] 
		\centering
		\includegraphics[width=.5\textwidth]{./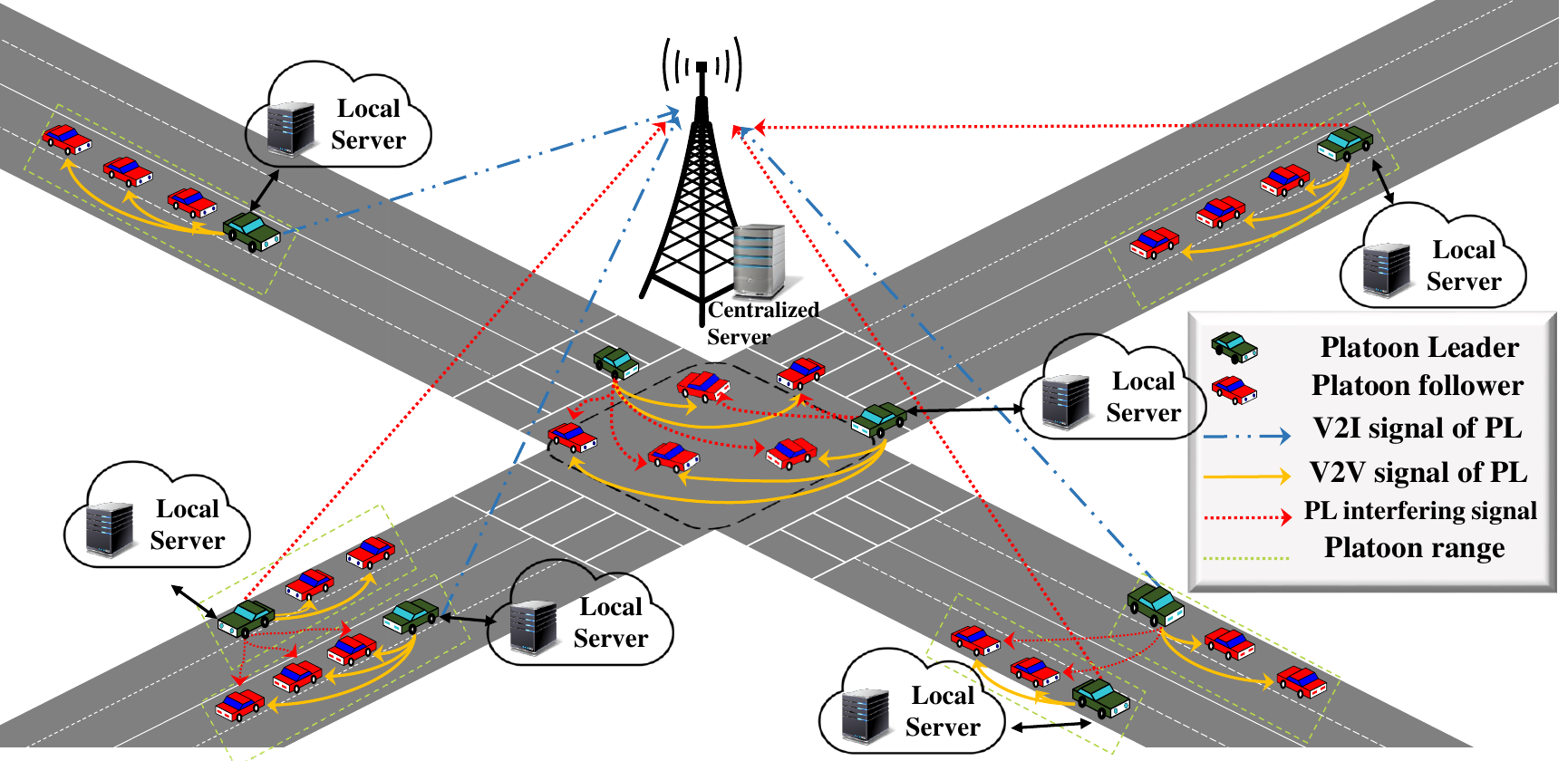}
		\caption{The multi-lane platoon scenario.} 
		\label{system_model}
	\end{figure}
	We consider a cellular V2X based vehicular communication network which consists of one RSU and multiple platoons, as shown in Fig. \ref{system_model}. The RSU is located at the center of the crossroad and is equipped with single antenna. We assume $ \mathcal{P} = \{1,2,\dots,P\},~P\in\mathbb{N} $, indicates the set of platoons. Each platoon itself is comprised of some connected and automated vehicles. Let $ \mathcal{N}_j = \{1,2,\dots,N_j\},~N_j\in\mathbb{N} $, be the number of vehicles in each platoon $ j \in \mathcal{P} $ which are numbered sequentially from one to $ N_j $, starting from PL. We discretize the time horizon into equal scheduling slots of length $ \Delta t $,
	indexed by a positive integer $ t \in \mathbb{N} $.  	
	The system bandwidth is divided into orthogonal subchannels of size $ W $. They are indexed by $ k \in \mathcal{K}=\{1, 2, \dots, K\} $. In essence, there are two types of communication modes in a platooning system, namely the intra-platoon and inter-platoon communication. In intra-platoon communication, vehicles within the same platoon, exchange the CAM information periodically through V2V links. According to the 3GPP specifications, \cite{3gpp_22_886}, CAMs dissemination frequency must be  between 10 to 100 Hz. In other words, the CAMs distribution period must be kept in the range of 100 ms or fewer. In inter-platoon communication, the RSU exchanges the intersection safety and platoon control information with every platoon via the V2I links. The first one is crucial in terms of guaranteeing the platoon string stability which lets the vehicles keep a close distance with each other and ensuring that all the platoon members are aware of the kinematics and the decisions of the other platoon members, especially the platoon leader. The latter is essential to inform all the platoons to become aware of the other platoons' status and traffic condition of the intersection.
	We exploit the orthogonal frequency division multiplexing (OFDM) to cope with the frequency selective wireless channels\footnote{It is necessary to mention that in this work, we only consider the channel gains related to the platoon leader interactions with the RSU and its followers.}. Furthermore, we assume that the channel fading is independent across different subchannels and remains constant within one sub channel. We model the channel gain of PL $ j \in \mathcal{P} $ in subchannel $ k $ during one coherence time period $ t $ as
	\begin{equation}
	h_j^t[k] = \alpha_j^t g_j^t[k],
	\label{inter_platoon_channel}
	\end{equation}
	where $ \alpha_j^t $ and $ g_j^t[k] $ denote the large-scale fading effect comprised of path loss and shadowing, and small-scale fading, respectively. Moreover, we define the binary variable $ \beta^t_{j,k}\in\{0, 1\} $ that indicates whether subchannel $ k $ is allocated to platoon $ j $ at time slot $ t $. Then PL $ j $ will decide whether to use the allocated subchannel for inter-platoon (i.e., to communicate with the RSU) or intra-platoon (i.e., to broadcast the CAM to its followers) communication. For this reason, we define another binary decision variable $ \theta^{t}_{j}\in\{0, 1\} $ that indicates the platoon leader's decision. When $ \theta^{t}_{j} = 1 $, that means that the PL will utilize the allocated subchannel for broadcasting (intra-platoon) and $ \theta^{t}_{j} = 0 $ indicates that the subchannel will be used for V2I (inter-platoon) communication.	
	We can express the instantaneous rates achieved in V2I communications between PL $ j $ and the RSU according to the Shannon capacity formula as follows:
	\begin{equation}
	\begin{aligned}
	&\mathcal{C}_{j,\Re}^{t}[k]=\log_2
	\left(1+\frac{(1-\theta^t_{j})\beta^t_{j,k} p^t_{j}[k]h^{t}_{j,\Re}[k]}{{I}^{t}_j[k]+\sigma^{2}}\right),\\
	&{I}^{t}_j[k] = \resizebox{.4\hsize}{!}{$ \sum_{j^{\prime}}\beta^t_{j^{\prime},k}p^t_{j^{\prime}}[k]{h}^{t}_{j^{\prime},\Re}[k]$},~~j\neq j^{\myprime},
	\end{aligned}
	\end{equation}
	where the interference from other platoons is treated as noise, $ p^t_{j}[k] $ is the transmit power level used by PL $ j $ on subchannel $ k $,  $ h^{t}_{j,\Re}[k] $ is the channel gain from PL $ j $ to RSU in subchannel $ k $, $ \sigma^2 $ is the noise power, $ \Re $ indicates the RSU location, $ {h}^{t}_{j^{\myprime},\Re} $ is the interfering channel to the RSU from PL $ j^{\myprime} \in \mathcal{P} $  functioning in whether inter ($ \theta_{j}^t=0 $) or intra-platoon ($ \theta_{j}^t=1 $) communication mode, and $ {I}^{t}_j[k] $ represents the total interference power. 
	Furthermore, we can calculate the instantaneous rates between PL $ j $ and its follower $ i $ as
	\begin{equation}
	\begin{aligned}
	&\mathcal{C}_{j,i}^{t}[k] =\log\left(1+\frac{\theta^t_{j}\beta^t_{j,k} p^t_{j}[k]h^{t}_{j,i}[k]}{{I}_j^{\prime, t}[k]+\sigma^{2}}\right),\\
	&\resizebox{.9\hsize}{!}{${I}_j^{\prime, t}[k] =  \sum_{j^{\prime}}\beta^t_{j^{\prime},k}p^t_{j^{\prime}}[k]{h}^{t}_{j^{\prime},i}[k],~~j\neq j^{\myprime},~~i\in\mathcal{N}_j\backslash\{1\} $},
	\end{aligned}
	\end{equation}
	where $ p_j^t[k] $ is the power used by PL $ j $, $ h^{t}_{j,i}[k] $ is the channel gain from PL $ j $ to its PMs in subchannel $ k $, $ {h}^{t}_{j^{\myprime},i} $ is  the interfering channel to PL $ j $'s members from PL $ j^{\myprime} \in \mathcal{P} $  functioning in whether inter ($ \theta_{j}^t=0 $) or intra-platoon ($ \theta_{j}^t=1 $) communication mode, and $ {I}_j^{\prime,t}[k] $ represents the total interference power. As described earlier, the PL has to maintain timely communication with the RSU to exchange the intersection safety messages. In this regard, we note $ A^{t}_{j} $ as the AoI of platoon $ j\in\mathcal{P}$ up to the beginning of scheduling slot $ t $, that is, the time elapsed since the most recently successful V2I communication\cite{kaul2011AoI}. The AoI of platoon $ j\in\mathcal{P}$ evolves according to
	\begin{align}
	\resizebox{.91\hsize}{!}{$A_{j}^{t+1}=
		\begin{cases}
		\Delta t, & \text { if }  (1-\theta^t_{j})\beta^t_{j,k}\cdot \mathcal{C}_{j,\Re}^{t}[k]\geq\mathcal{C}_{j,\Re}^{\text{min}}, \\
		A^{t}_{j}+\Delta t, & \text { otherwise }
		\end{cases}$}
	\label{AoI_evolution}
	\end{align}
	where $ \mathcal{C}_{j,\Re}^{\text{min}} $ is the minimum capacity requirement of V2I communication. 
	As (\ref{AoI_evolution}) suggests, within every successful transmission between the RSU and PL $ j\in\mathcal{P} $, the AoI will reset to $ \Delta t $. 
	Accordingly, we can express the multi-objective optimization problem (MoP) for platoon $ j $ as
	\begin{equation}
	\begin{aligned}
	\min_{\boldsymbol{\beta},\boldsymbol{\theta},\boldsymbol{p}}& \Bigg\{\frac{1}{T}\sum_{t=1}^{T}A^{t}_{j}, -\text{ Pr }\bigg\{\sum_{t=1}^{T}\sum_{k\in\mathcal{K}}\min_i\left\{\mathcal{C}_{j,i}^{t}[k]\right\}\Delta t\geq\zeta_j\bigg\},\\ &\frac{1}{T}\sum_{t=1}^{T}\sum_{k\in\mathcal{K}}p^t_j[k]\Bigg\},\\
	\textbf{s.t.} \quad 
	&C1: \mathcal{C}_{j,\Re}^{t}[k] \geq \mathcal{C}_{j,\Re}^{\text{min}},\quad\quad~\forall j\in\mathcal{P},~\forall k\in\mathcal{K},\\
	&C2: \beta_{j,k}^t\,,\theta^t_{j}\in\{0, 1\},~~~~\forall j\in\mathcal{P},~\forall k\in\mathcal{K},\\
	&C3: \sum_{k\in\mathcal{K}}\beta_{j,k}^t \leq 1,\quad\quad~~~\forall j\in\mathcal{P},~\forall t\in\mathbb{N},\\
	&C4:  p_j^t[k] \leq p_j^{\text{max}},\quad\quad\quad\forall j\in\mathcal{P},~\forall k\in\mathcal{K},\\
	\end{aligned}
	\label{optimization_problem}
	\end{equation}
	where $ \zeta_j $ is the CAM message size. 
	The objective is to minimize the expected AoI and power consumption  for every platoon while maximizing the probability of CAM messages delivery rate among the PMs within every $ T $ seconds\footnote{
		As stated in Section \ref{S_M}, $ T $ must be below 100 ms according to \cite{3gpp_22_886}.}. Constraint C3 shows that each platoon can access only one subchannel in every time slot and constraint C4 is to satisfy that the transmit power of PL $ j $ remains below its maximum value $ p_j^{\text{max}} $. 
	In optimization problem (\ref{optimization_problem}), the mode selection indicator $ \theta^t_j $ and subchannel selection indicator $ \beta $ are both binary variables. Furthermore, the objective function is non-convex. Consequently, the optimization problem (\ref{optimization_problem}) is a NP-hard combinatorial optimization problem\cite{V2X_Modeselection_DRL}, which is difficult to be solved. In this regard, we will investigate the state-of-the-art multi-agent deep deterministic policy gradient methods to address the complexities of the proposed optimization problem.
	\section{Multi-Agent RL Based Resource Allocation}\label{MARLFRAMEWORK}
	In this section, we will elaborate on the multi-agent environment and its associated states, actions, and rewards, and finally, we will discuss the proposed MARL algorithm and its relevant formulations. 
	\subsection{Modeling of Multi-Agent Environment}
	For a MARL with $ P $ agents (platoons), the optimization problems can be expressed as
	\begin{equation}\label{marlopt}
		\max_{\pi_{j}}\mathcal{J}_j(\pi_j),\quad j\in\mathcal{P},~~ \pi_j\in\Pi_j,
	\end{equation}
	where $ \mathcal{J}_j(\pi_j) = \mathbb{E}[\sum_{t=0}^{\infty}\gamma^tR^{t+1}_j|s^0_j] $,  $ \pi_j $ is the policy of agent $ j $, and $ \Pi_j $ is the set of all feasible policies for agent $ j $. 
	Each PL as an agent interacts with the vehicular network environment and takes action according to its policy, aiming at solving the optimization problem (\ref{optimization_problem}), or in other words, maximizing its total expected reward (\ref{marlopt}). At each time $ t $, the PL observes a state, $ s^t $, and accordingly takes action, $ a^t $. The environment transitions to a new state $ s^{t+1} $ and PL receives a reward based on its selected action. In our proposed system model the state space $ \mathcal{S} $, action space $ \mathcal{A} $, and the reward function $ r^t $, are defined as follows:
	 
	 $\bullet$ ~\textbf{State space:} 
	The state observed by the PL $ j $ (agent $ j $) at time slot $ t $ consists of several parts: the instant channel information between PL $ j $ and the RSU, $ h^t_{j,\Re}[k] $, for all $ k\in\mathcal{K} $, the channel information between PL $ j $ and its followers, $ h^t_{j,i}[k] $, $ i\in\mathcal{N}_j\backslash\{1\} $, the previous interference from other platoons to PL $ j $, $ I_j^{t-1}[k] $, the AoI of PL $ j $, $ A_{j}^{t} $, the remaining intra-platoon payload (CAM message) designated to be transferred by $ T $, $ \zeta^{r}_j $, and the remaining time budget, $ T^{r}_j $. Hence, the state space of PL $ j $ is
	$$ 
	\mathbf{s}^t_j = \left[h^t_{j,\Re}[k], h^t_{j,i}[k], I_j^{t-1}[k], A_{j}^{t},  \zeta^{r}_j, T^{r}_j\right],~~ j\in\mathcal{P}.
	$$
	
	$\bullet$ ~\textbf{Action space:}
	The action of each PL $ j \in \mathcal{P} $ is defined as $\mathbf{a}_j^t = \{\beta^t_j, \theta^t_j, p^t_j\}$. As mentioned earlier, $ \beta^t_j $ indicates which subchannel the PL $ j \in \mathcal{P} $ has selected, $ \theta^t_j $ represents the mode selection, and $ p^t_j $ represents the power control. It is noteworthy to mention that because we have applied the deep deterministic policy gradient method, the agent can select any power ranging from 0 to $ p_j^{\text{max}} $. This is the advantage of policy gradient methods that apply continuous actions spaces and can converge to more accurate results than conventional DQNs in which the power has to be discretized.
	
	$\bullet$ ~\textbf{Reward function:}
	What makes the reinforcement learning framework fascinating is the flexibility we have in designing the reward function that drives the learning process. In our proposed learning problem, the agents receive two reward signals, a global team reward, which evaluates the agents' cooperation, and an individual reward, which measures each agent’s performance. Accordingly, we first discuss the proposed learning algorithm and then return to the reward function's design. 
	
	The MARL frameworks' architecture is shown in Fig. \ref{MARL}, which is built on top of the MADDPG structure. In particular, we have designed two MARL frameworks, namely the \textit{Modified MADDPG} and \textit{Modified MADDPG with task decomposition}, in which the latter is the extension of the first one, where the holistic local reward function of each agent is further decomposed into sub-reward functions and learned separately. Unlike MADDPG, which uses a single critic to train multiple agents, the proposed framework trains two critics with the following functionalities: The centralized global critic, which is shared between all the agents, takes the observations and actions of all the agents as input and estimates the global team reward for them. The local critic, which is specific for each agent, receives the agent's local observation and action and estimates the local expected reward. In a sense, the goal is to simultaneously move the policy toward maximizing both global and local rewards and solve the optimization problem (\ref{optimization_problem}) for each agent. Furthermore, the agents do not necessarily need to know each other's policies and take actions based on their own observations. The agents' performance will be considered as ``decent" only when they act in a way that results in a proper global team reward as well as a satisfactory individual reward for each agent.
	\vspace*{-0.1em}
	\subsection{Modified MADDPG}
	Let \resizebox{0.18\textwidth}{!}{$ \Theta_{\pi}=\left(W_{\pi}^{(1)},\dots,W_{\pi}^{(L_{\pi})}\right) $} and \resizebox{0.18\textwidth}{!}{$ \Phi_{q}=\left(W_{q}^{(1)},\dots,W_{q}^{(L_{q})}\right) $}, be the parameter space of agents' actor and critic networks and \resizebox{0.18\textwidth}{!}{$ \Psi_{g}=\left(W_{g}^{(1)},\dots,W_{g}^{(L_{g})}\right) $} be the parameter space of the global critic, where $ L_{\pi}, L_{q}$ and $ L_{g} $ are the number of hidden layers in agents' actor and critic networks and the global critic, respectively. $ Ws$ are the neural networks' weight matrices and their dimensions are related to the number of nodes in the hidden layers. We consider a vehicular environment consisting of $ P $ platoons (agents) 
	with policies $ \boldsymbol{\pi} = \{\pi_1,\dots,\pi_P\} $. The agents' policies $ \pi_j $ and Q-functions $ Q^j_{\phi_j} $, and the global critic's Q-function $ Q^g_{\psi} $ are parameterized by $ \theta_{j} $, $ \phi_j $ and $ \psi $, respectively, where $ \theta_{j}\in\Theta_{\pi} $, $ \phi_j\in\Phi_{q} $ and $ \psi\in\Psi_{g} $. The MADDPG  for platoon $ j $ can be written as
	\begin{equation}
	\nabla_{\theta_j}\mathcal{J}_j=\mathbb{E}\left[\left.\nabla_{\theta_j} \pi_{j}\left(a_{j} \mid s_{j}\right) \nabla_{a_{j}} Q_{j}^{\boldsymbol{\pi}}\left(\mathbf{s}, \mathbf{a} \right)\right|_{a_{j}=\pi_{j}\left(s_j\right)}\right],
	\nonumber
	\end{equation}
	where $ \mathbf{s} = (s_1,\dots,s_P) $ and $ \mathbf{a} = (a_1,\dots,a_P) $ are the total state and action spaces. $ Q_{j}^{\pi}\left(\mathbf{s}, \mathbf{a}\right) $ is the centralized action-value function that takes the actions and states of the agents as its input to estimate Q-value for platoon $ j $. Based on the framework depicted in Fig. \ref{MARL}, the modified policy gradient for each agent $ j $ can be written as
	\vspace*{-0.5em}
	\begin{figure*}[!t] 
		\centering
		\includegraphics[width=.99\textwidth]{./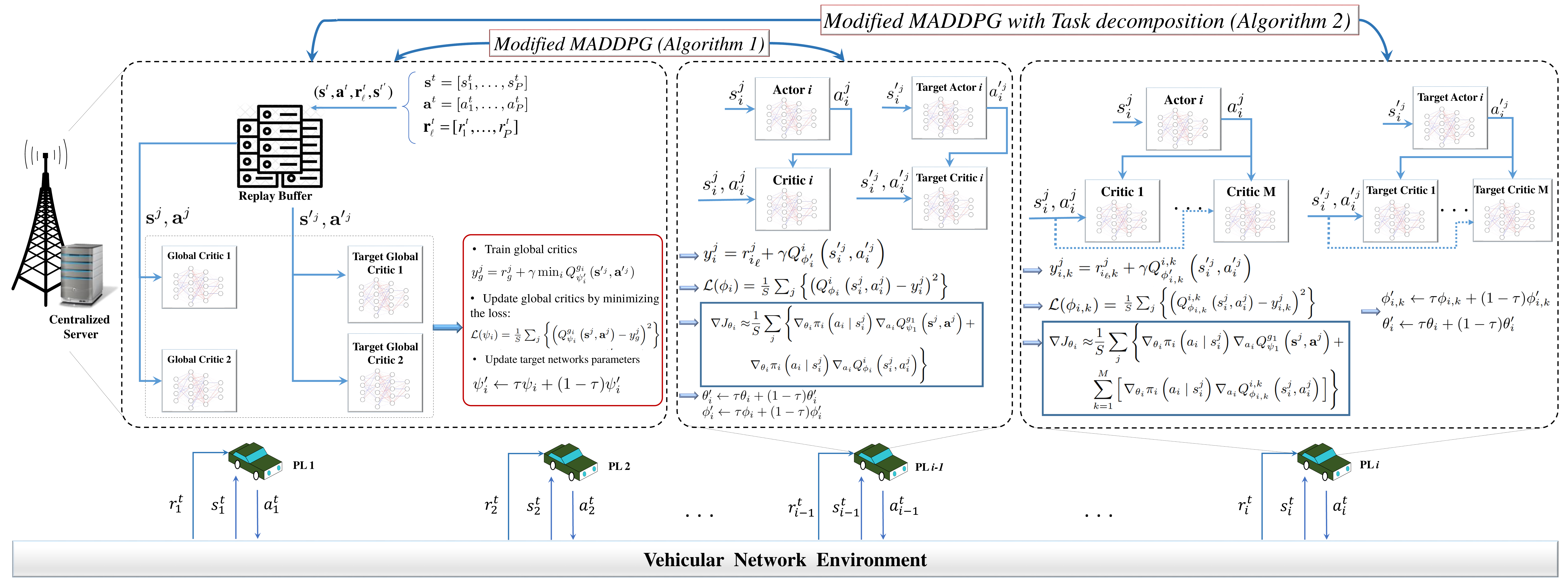}
		\caption{The architecture of the modified MADDPG and the modified MADDPG with task decomposition frameworks. The functionality of the global critic which is implemented at the RSU is similar for both the algorithms. However, they apply different procedures for the individual performance of the agents. (Notice the differences between the algorithms highlighted in blue boxes.)} 
		\label{MARL}
	\end{figure*}
	\begin{equation}\label{dec_maddpg}
	\begin{aligned}
	\nabla_{\theta_j}\mathcal{J}_j=&\underbrace{\mathbb{E}_{\mathbf{s}, \mathbf{a}\sim\mathcal{D}}\left[\left.\nabla_{\theta_{j}} \pi_{j}\left(a_{j} \mid s_{j}\right) \nabla_{a_{j}} Q^{g}_{\psi}\left(\mathbf{s}, \mathbf{a} \right)\right.\right]}_{\textit{Global Critic}} + \\
	& \underbrace{\mathbb{E}_{{s}_j, {a}_j\sim\mathcal{D}}\left[\left.\nabla_{\theta_{j}} \pi_{j}\left(a_{j} \mid s_{j}\right) \nabla_{a_{j}} Q^{j}_{\phi_j}\left({s}_j, {a}_j \right)\right.\right]}_{\textit{Local Critic}},
	\end{aligned}
	\end{equation}
	where $ a^t_j = \pi_j(s^t_j) $ is the action the agent $ j $ chooses following its policy $ \pi_j $. The first term in (\ref{dec_maddpg}) refers to the global critic which takes as input the agents' states and actions and estimates the team reward. The second term in (\ref{dec_maddpg}) refers to each agent's local critic that unlike the global critic, only takes each agent's local state and action to estimate the agent's individual performance. The global critic is updated as
	\begin{equation}
	\mathcal{L}(\psi)=\mathbb{E}_{\mathbf{s}, \mathbf{a}, \mathbf{r}, \mathbf{s}^{\prime}}\left[\left(Q^{g}_{\psi}\left(\mathbf{s}, \mathbf{a}\right)-y_{g}\right)^{2}\right],
	\end{equation}
	where $ y_g $ is the target value and is defined as follows:
	\begin{equation}
	y_{g}=r_{g}+\left.\gamma Q^{g}_{\psi^{\prime}}\left(\mathbf{s}^{\prime}, \mathbf{a}^{\prime}\right)\right|_{a_{j}^{\prime}=\pi_{j}^{\prime}\left(s_{j}^{\prime}\right)},
	\end{equation}
	where $ \boldsymbol{\pi}^{\prime} = \{\pi_1^{\prime},\dots,\pi_P^{\prime}\} $ refers to the target policies which are parameterized by $ \boldsymbol{\theta}^{\prime} = \{\theta_1^{\prime},\dots,\theta_P^{\prime}\} $. Similarly the local critic of agent $ j $, $ Q^j $, is updated by
	\begin{equation}\label{local_q}
	\mathcal{L}^j(\phi_j)=\mathbb{E}_{\mathbf{s}_j, \mathbf{a}_j, \mathbf{r}_j, \mathbf{s}^{\prime}_j}\left[\left(Q^{j}_{\phi_j}\left({s}_j, {a}_j\right)-y_{\ell}^j\right)^{2}\right],
	\end{equation}
	and $ y_{\ell}^j $ is defined as
	\begin{equation}\label{local_a}
	y_{\ell}^j=r_{\ell}^j+\left.\gamma Q^{j}_{\phi_j^{\prime}}\left({s}_j^{\prime}, {a}_j^{\prime}\right)\right|_{a_{j}^{\prime}=\pi_{j}^{\prime}\left(s_{j}^{\prime}\right)}.
	\end{equation}	
	Although the proposed framework can lead to decent results, there is still the problem of overestimation and suboptimal policies in Q-functions due to the function approximation errors. Motivated from the results in \cite{fujimoto2018addressing}, the global critic is replaced with the Twin delayed Deterministic Policy Gradient in  (\ref{dec_maddpg}). The resulting policy gradient is
	\begin{equation}\label{td3_dec_maddpg}
	\begin{aligned}
	\nabla_{\theta_j}\mathcal{J}_j=&\underbrace{\mathbb{E}_{\mathbf{s}, \mathbf{a}\sim\mathcal{D}}\left[\left.\nabla_{\theta_{j}} \pi_{j}\left(a_{j} \mid s_{j}\right) \nabla_{a_{j}} Q^{g_{1}}_{\psi_1}\left(\mathbf{s}, \mathbf{a} \right)\right.\right]}_{\textit{TD3 Global Critic}} + \\
	& \underbrace{\mathbb{E}_{{s}_j, {a}_j\sim\mathcal{D}}\left[\left.\nabla_{\theta_{j}} \pi_{j}\left(a_{j} \mid s_{j}\right) \nabla_{a_{j}} Q^{j}_{\phi_j}\left({s}_j, {a}_j \right)\right.\right]}_{\textit{Local Critic}}.
	\end{aligned}
	\end{equation}
	In (\ref{td3_dec_maddpg}), the twin global critics are updated as
	\begin{equation}
		\mathcal{L}(\psi_{i})=\mathbb{E}_{\mathbf{s}, \mathbf{a}, \mathbf{r}, \mathbf{s}^{\prime}}\left[\left(Q^{g_{i}}_{\psi_i}\left(\mathbf{s}, \mathbf{a}\right)-y_{g}\right)^{2}\right],
	\end{equation}
	where $ y_g $ is defined as follows:
	\vspace{-1em}
	\begin{equation}
	y_{g}=r_{g}+\left.\gamma \min_{i=1, 2}Q^{g_i}_{\psi_i^{\prime}}\left(\mathbf{s}^{\prime}, \mathbf{a}^{\prime}\right)\right|_{a_{j}^{\prime}=\pi_{j}^{\prime}\left(s_{j}^{\prime}\right)},
	\end{equation}
		\begin{algorithm}[!t]
		\small
		\caption{Modified MADDPG\label{algorithm1}}
		\renewcommand{\arraystretch}{0.5}{
			Start environment simulator  and generate platoons\\
			Initialize main global critic networks $ Q^{g_1}_{\psi_1} $ and $ Q^{g_2}_{\psi_2}$\\
			Initialize target global critic networks $ Q^{g_1}_{\psi_1^{\prime}} $ and $ Q^{g_2}_{\psi_2^{\prime}}$\\
			Initialize each agent's policy and critic networks
			
			\For{each episode}
			{Update platoons locations and respective channel gains\\
				Reset the Intra-platoon payload $ \zeta $ and maximum delivery time $ T $ to 100 ms 
				\\
				\For{each timestep t}
				{\For{each agent k}
					{
						Observe $ s_k^t $ and
						select action $ a^t_k = \pi_{\theta_{k}}(s_k^t) $\\
					}
					$ \mathbf{s}^t = [s_1^t,\dots,s_P^t] $,\quad
					$ \mathbf{a}^t = [a_1^t,\dots,a_P^t] $\\
					Receive global and local rewards, $ r_g^t $ and $ \mathbf{r}_l^t $\\
					Store $ (\mathbf{s}^t, \mathbf{a}^t, \mathbf{r}_l^t, r_g^t, \mathbf{s}^{t+1}) $ in replay buffer $ \mathcal{D} $
				}
				
				Sample minibatch of size S, $ (\mathbf{s}^j, \mathbf{a}^j, \mathbf{r}_g^j, \mathbf{r}_{\ell}^j, \mathbf{s}^{\prime_{j}}) $, from replay buffer $ \mathcal{D} $\\
				Set $ y_{g}^j=r_{g}^j+\gamma \min_{i}Q_{\psi^{\prime}_i}^{g_i}\left(\mathbf{s}^{\prime_{j}}, \mathbf{a}^{\prime_{j}}\right) $\\
				Update global critics by minimizing the loss: \\$ \mathcal{L}(\psi_{i})=\frac{1}{S}\sum_{j}\left\{\left(Q_{\psi_{i}}^{g_{i}}\left(\mathbf{s}^j, \mathbf{a}^j\right)-y_{g}^j\right)^{2}\right\} $\\
				Update target parameters:
				$ \psi_{i}^{\prime} \leftarrow \tau\psi_{i} + (1-\tau)\psi_{i}^{\prime} $\\
				\If{episode mod d}
				{
					Train local critics and actors\\
					\For{each agent i}
					{
						Set $ y^j_i=r_{i_{\ell}}^{j}+\gamma Q_{\phi_i^{\prime}}^{i}\left({s}^{{\prime}_j}_i, {a}^{{\prime}_j}_i\right) $\\
						Update local critics by minimizing the loss: \\$ \mathcal{L}(\phi_{i})=\frac{1}{S}\sum_{j}\left\{\left(Q_{\phi_i}^{i}\left({s}_i^j, {a}_i^j\right)-y^j_i\right)^{2}\right\} $\\
						Update local actors: \\
						$
						\begin{aligned}
						\nabla J_{\theta_{i}}\approx
						&\frac{1}{S}\sum_{j}\Bigg\{\nabla_{\theta_{i}} \pi_{i}\left(a_{i} \mid s_{i}^j\right) \nabla_{a_{i}} Q_{\psi_{1}}^{g_{1}}\left(\mathbf{s}^j, \mathbf{a}^j \right) + \\
						& {\nabla_{\theta_{i}} \pi_{i}\left(a_{i} \mid s^{j}_i\right) \nabla_{a_{i}} Q^{i}_{\phi_{i}}\left({s}_i^j, {a}_i^j \right)}\Bigg\}
						\end{aligned}
						$
						
						Update target networks parameters: \\
						$ \theta_{i}^{\prime} \leftarrow \tau\theta_{i} + (1-\tau)\theta_{i}^{\prime} $\\
						$ \phi_{i}^{\prime} \leftarrow \tau\phi_{i} + (1-\tau)\phi_{i}^{\prime} $
				}}
			}
		}
	\end{algorithm}
	\begin{algorithm}[!t]
		\small
		\caption{Modified MADDPG with TDec.\label{algorithm2}}
		\renewcommand{\arraystretch}{0.5}{
			Start environment simulator  and generate platoons\\
			Initialize main global critic networks $ Q^{g_1}_{\psi_1} $ and $ Q^{g_2}_{\psi_2}$\\
			Initialize target global critic networks $ Q^{g_1}_{\psi_1^{\prime}} $ and $ Q^{g_2}_{\psi_2^{\prime}}$\\
			Initialize each agent's policy networks\\
			Initialize each agent's task specific critic networks
			
			\For{each episode}
			{ Update platoons locations and respective channel gains\\
				Reset the Intra-platoon payload $ \zeta $ and maximum delivery time $ T $ to 100 ms 
				\\
				\For{each timestep t}
				{\For{each agent k}
					{
						Observe $ s_k^t $ and
						select action $ a^t_k = \pi_{\theta_{k}}(s_k^t) $\\
					}
					$ \mathbf{s}^t = [s_1^t,\dots,s_P^t] $,\quad
					$ \mathbf{a}^t = [a_1^t,\dots,a_P^t] $\\
					Receive global and local rewards, $ r_g^t $ and $ \mathbf{r}_l^t $\\
					Store $ (\mathbf{s}^t, \mathbf{a}^t, \mathbf{r}_l^t, r_g^t, \mathbf{s}^{t+1}) $ in replay buffer $ \mathcal{D} $
				}
				
				Sample minibatch of size S, $ (\mathbf{s}^j, \mathbf{a}^j, \mathbf{r}_g^j, \mathbf{r}_{\ell}^j, \mathbf{s}^{\prime_{j}}) $, from replay buffer $ \mathcal{D} $\\
				Set $ y_{g}^j=r_{g}^j+\gamma \min_{i}Q_{\psi^{\prime}_i}^{g_i}\left(\mathbf{s}^{\prime_{j}}, \mathbf{a}^{\prime_{j}}\right) $\\
				Update global critics by minimizing the loss: \\$ \mathcal{L}(\psi_i)=\frac{1}{S}\sum_{j}\left\{\left(Q_{\psi_{i}}^{g_{i}}\left(\mathbf{s}^j, \mathbf{a}^j\right)-y_{g}^j\right)^{2}\right\} $\\
				Update target parameters:
				$ \psi_{i}^{\prime} \leftarrow \tau\psi_{i} + (1-\tau)\psi_{i}^{\prime} $\\
				\If{episode mod d}
				{
					Train local critics and actors\\
					\For{each agent i}
					{\For{each task k}
						{
							Set $ y^j_{i,k}=r_{i_{\ell},k}^j+\gamma Q^{i,k}_{\phi_{i,k}^{\prime}}\left({s}^{{\prime}_j}_i, {a}^{{\prime}_j}_i\right) $\\
							Update local critics by minimizing the loss: \\$ \mathcal{L}(\phi_{i,k})=\frac{1}{S}\sum_{j}\left\{\left(Q^{i,k}_{\phi_{i,k}}\left({s}_i^j, {a}_i^j\right)-y^j_{i,k}\right)^{2}\right\} $\\
						}
						Update local actors: \\
						$
						\begin{aligned}
						\nabla J_{\theta_{i}}\approx
						&\frac{1}{S}\sum_{j}\Bigg\{\nabla_{\theta_{i}} \pi_{i}\left(a_{i} \mid s_{i}^j\right) \nabla_{a_{i}} Q_{\psi_{1}}^{g_{1}}\left(\mathbf{s}^j, \mathbf{a}^j \right) + \\
						& {\sum_{k=1}^{M}\left[\left.\nabla_{\theta_{i}} \pi_{i}\left(a_{i} \mid s_i^{j}\right) \nabla_{a_{i}} Q^{i,k}_{\phi_{i,k}}\left({s}^j_i, {a}^j_i \right)\right.\right]}\Bigg\}
						\end{aligned}
						$

						Update target networks parameters: \\
						\For{each task k}
						{
							$ \phi_{i,k}^{\prime} \leftarrow \tau\phi_{i,k} + (1-\tau)\phi_{i,k}^{\prime} $
						}
						$ \theta_{i}^{\prime} \leftarrow \tau\theta_{i} + (1-\tau)\theta_{i}^{\prime} $	
				}}
			}
		}
	\end{algorithm}
	and similarly, the agents' local critics are updated by (\ref{local_q}) and (\ref{local_a}). 
	The modified MADDPG framework depicted in Fig. \ref{MARL} is described in Algorithm \ref{algorithm1}. The core idea in TD3 is to delay the policy updates for $ d $ iterations until the convergence of value estimates. 
	Now, we can return to the issue of designing the reward function. The Reward function must judiciously be adjusted so that the multi-agent system steps on the path of solving the optimization problem (\ref{optimization_problem}). In essence, each PL as an agent, tries to access the available subchannels for two reasons: i) maintain an updated connection with the RSU and  keep the AoI level at its minimum, ii) disseminate the CAM information $ \zeta $ to its followers. Accordingly, we design the local reward of every platoon $ j $ as
	\begin{equation}\label{local_reward}
	\begin{aligned}
	r_{\ell}^j = &-\underbrace{\left\{\kappa_1{\zeta_j^r}/{\zeta_j}\right\}}_{\textit{Mode 1}} - \underbrace{\kappa_2A_j^t + \kappa_3G\left(\mathcal{C}_{j,\Re}^{t} - \mathcal{C}_{j,\Re}^{\text{min}}\right)}_{\textit{Mode 0}}\\
	&-\kappa_4\mathcal{F}\{p^t_j\},
	\end{aligned}
	\end{equation}
	where $ \kappa_1 - \kappa_4 $ are weighting factors used for balancing the reward, and $ \mathcal{F}\{.\} $  is a function that restricts the power quantity to the same range as the other components in the reward function.
	Furthermore, $ G(x) $ is a stepwise function given by
	\begin{equation}
	G(x)=\left\{\begin{array}{l}
	A, \quad x \geq 0, \\
	0, \quad x<0,
	\end{array}\right.
	\nonumber
	\end{equation}
	where $ A>0 $ is tuned to be a positive constant to indicate the revenue.
	The reward function in (\ref{local_reward}) consists of three parts that are matched with the objective function of the optimization problem (\ref{optimization_problem}): the first part is related to the reward the agent receives when the intra-platoon communication is chosen, the second part refers to the reward for the agent in the inter-platoon communication mode and the third part is related to the negative reward for the agent due to the power consumption. 
	Correspondingly, we define the global reward function as
	\begin{equation}\label{global_reward}
	r_g^t = - \frac{1}{P}\sum_{j\in\mathcal{P}}\sum_{k\in\mathcal{K}}\log_{10}\{{\mathbf{I}}^{t}_j[k]\}.
	\end{equation}
	
	The inspiration behind choosing the global reward function to be equal to the average interference is that the platoons are derived toward choosing subchannels and power levels that impose less interference on other platoons. 
	It is observed from Algorithm \ref{algorithm1} that the global critic is trained more than the local actor and critic networks since we have applied the TD3 algorithm. 
	The introduced delay, which is related to the hyperparameter $d$, can lead to faster convergence of the system by addressing the overestimation bias of global Q-function. 
	
	The following section will discuss the multi-task MARL, its corresponding formulations, and the intuition behind devising such an algorithm.
	
	\subsection{Modified MADDPG With Task Decomposition}
	In practice, the RL agents have to perform multiple tasks., and in order to drive the policy toward maximizing these tasks simultaneously, we have to integrate these tasks as a single holistic task and design a single reward signal, as was stated in (\ref{local_reward}). However, the drawback of applying such a method is that it cannot guarantee each sub-objective optimality, although the holistic reward function may exhibit encouraging signs of convergence. Therefore, for a MARL system consisting of $ M $  tasks and $ P $ agents, we change the optimization problem (\ref{marlopt}) as follows:
	\vspace*{-0.5em}
	\begin{equation}\label{marlopttd}
	\begin{aligned}
	& \max_{\pi_{j}}\mathcal{J}_j(\pi_j),\quad j\in\mathcal{P},~~ \pi_j\in\Pi_j\\
	& \mathcal{J}_j(\pi_j) = [\mathcal{J}^1_j(\pi_j),\dots,\mathcal{J}_j^M(\pi_j)],
	\end{aligned}	
	\end{equation}
	where $ \mathcal{J}_j^M(\pi_j) $ is related to the agent $ j $'s objective function for the $ M $th task. From (\ref{marlopttd}), it is conceived that we can deconstruct the holistic objective function into multiple sub-objectives based on the sub-tasks. The following theorem provides the condition for task decomposition, which results from decomposing the holistic reward function into sub-reward functions that can optimize the corresponding sub-objectives separately.

	\begin{thm}\label{theory1}
		If the reward function  $ R $, can be decomposed into $ M $ sub-reward functions, i.e., $ R(s, a, s^{\prime})=\sum_{k=1}^{M}r_k(s, a, s^{\prime}) $, then the holistic objective function $ \mathcal{J}_j(\pi_j) $ can be written as $ \mathcal{J}_j(\pi_j) =\sum_{k=1}^{M}\mathcal{J}_j^k(\pi_j) $, where 
		\begin{equation}
			\mathcal{J}_j^k(\pi_j) = \mathbb{E}\left[\sum_{t=0}^{\infty}\gamma^tr^{t+1,k}_{j}|s^0_j\right],\quad k=1,\dots,M.
		\end{equation}
	\end{thm}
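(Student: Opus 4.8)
The plan is to prove this by direct substitution followed by an appeal to the linearity of expectation, since the claim is essentially that the expected discounted return inherits the additive structure of the per-step reward. First I would write out the definition of the holistic objective function, $\mathcal{J}_j(\pi_j) = \mathbb{E}\left[\sum_{t=0}^{\infty}\gamma^t R^{t+1}_j \mid s^0_j\right]$, and insert the hypothesis $R(s,a,s^{\prime}) = \sum_{k=1}^{M} r_k(s,a,s^{\prime})$ evaluated along the trajectory, so that the instantaneous reward at step $t+1$ becomes $R^{t+1}_j = \sum_{k=1}^{M} r^{t+1,k}_j$. Substituting this gives a double sum inside the expectation, one over the time index $t$ and one over the task index $k$.

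The core step is then to interchange the two summations and push the finite sum over tasks outside the expectation. Because the sum over $k$ runs over only $M$ terms, Fubini/Tonelli is not really needed for that inner reordering; the finite sum commutes freely with both the infinite discounted sum and the expectation operator by linearity. Concretely, I would write
\begin{equation}
\mathcal{J}_j(\pi_j) = \mathbb{E}\left[\sum_{t=0}^{\infty}\gamma^t\sum_{k=1}^{M} r^{t+1,k}_{j}\,\Big|\,s^0_j\right] = \sum_{k=1}^{M}\mathbb{E}\left[\sum_{t=0}^{\infty}\gamma^t r^{t+1,k}_{j}\,\Big|\,s^0_j\right],
\end{equation}
and then identify each summand on the right as exactly $\mathcal{J}_j^k(\pi_j)$ by the definition stated in the theorem. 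This immediately yields $\mathcal{J}_j(\pi_j) = \sum_{k=1}^{M}\mathcal{J}_j^k(\pi_j)$, completing the argument.

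The one point that deserves care, and which I expect to be the only genuine obstacle, is the legitimacy of moving the expectation through the \emph{infinite} series in $t$ and of reordering that infinite series with the finite task sum. This is valid provided each sub-reward $r_k$ is bounded (say $|r_k| \le R_{\max}$) and the discount factor satisfies $\gamma \in [0,1)$, since then the discounted series converges absolutely and uniformly and dominated convergence justifies exchanging $\mathbb{E}$, $\sum_t$, and $\sum_k$. I would therefore state the boundedness of the rewards and $\gamma<1$ as a standing assumption (both already implicit in the MDP setup of the paper) and invoke it to guarantee absolute convergence; the remainder of the proof is then a routine rearrangement requiring no further estimates.
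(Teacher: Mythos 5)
Your proof is correct, but it takes a different route from the paper only in the sense that the paper does not actually prove the theorem at all: its ``proof'' consists of a single sentence deferring to Section III of the cited reference on hybrid reward architectures (van Seijen et al.), so there is no argument in the paper to compare against step by step. Your direct derivation --- substituting $R^{t+1}_j=\sum_{k=1}^{M}r^{t+1,k}_j$ into $\mathcal{J}_j(\pi_j)=\mathbb{E}\bigl[\sum_{t=0}^{\infty}\gamma^t R^{t+1}_j\mid s^0_j\bigr]$, swapping the finite task sum with the discounted time sum and the expectation, and identifying each resulting term with $\mathcal{J}_j^k(\pi_j)$ --- is exactly the standard argument that the cited reference relies on, and it is complete. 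Your attention to the only nontrivial point, namely justifying the interchange of $\mathbb{E}$ with the infinite series via bounded rewards and $\gamma\in[0,1)$ (dominated convergence / absolute convergence of the geometric tail), is appropriate and is more care than the paper itself takes; these hypotheses are indeed implicit in the MDP setup but are worth stating, as you do. In short, you have supplied the self-contained proof that the paper omits, and what your approach buys is precisely that self-containedness: a reader need not consult the external reference to verify the claim.
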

	\begin{proof}
		 We refer the readers to Section III of \cite{van2017hybrid} for a extensive review of reward decomposition literature in RL.
	\end{proof}

	Based on Theorem \ref{theory1}, we can decompose the agents' local critics in (\ref{td3_dec_maddpg}) based on the sub-tasks,  and the resulting policy gradient considering the functionality of the global critic would be,
	
	\begin{equation}\label{td3_dec_maddpg_decomposition}
	\begin{aligned}
	\nabla_{\theta_j}\mathcal{J}_j=&\underbrace{\mathbb{E}_{\mathbf{s}, \mathbf{a}\sim\mathcal{D}}\left[\left.\nabla_{\theta_{j}} \pi_{j}\left(a_{j} \mid s_{j}\right) \nabla_{a_{j}} Q^{g_{1}}_{\psi_{1}}\left(\mathbf{s}, \mathbf{a} \right)\right.\right]}_{\textit{TD3 Global Critic}} + \\
	& \underbrace{\sum_{k=1}^{M}\mathbb{E}_{{s}_j, {a}_j\sim\mathcal{D}}\left[\left.\nabla_{\theta_{j}} \pi_{j}\left(a_{j} \mid s_{j}\right) \nabla_{a_{j}} Q^{j,k}_{\phi_{j,k}}\left({s}_j, {a}_j \right)\right.\right]}_{\textit{Decomposed Local Critics}},
	\end{aligned}
	\end{equation}
	where the parameters of sub-critics for agent $ j $ are updated as
	\begin{equation}
	\begin{aligned}
		&\mathcal{L}^j_k(\phi_{j,k})=\mathbb{E}_{\mathbf{s}_j, \mathbf{a}_j, \mathbf{r}_j, \mathbf{s}^{\prime}_j}\left[\left(Q^{j,k}_{\phi_{j,k}}\left({s}_j, {a}_j\right)-y^{j,k}_{\ell}\right)^{2}\right],\\
		&y^{j,k}_{\ell}=r^{j,k}_{\ell}+\left.\gamma Q^{j,k}_{\phi_{j,k}^{\prime}}\left({s}_j^{\prime}, {a}_j^{\prime}\right)\right|_{a_{j}^{\prime}=\pi_{j}^{\prime}\left(s_{j}^{\prime}\right)}, k=1,\dots,M.
	\end{aligned}
	\end{equation}
	Comparing (\ref{td3_dec_maddpg_decomposition}) with (\ref{td3_dec_maddpg}) reveals that
	\begin{equation}
		Q^{j}\left({s}_j, {a}_j \right)=\sum_{k=1}^{M}Q^{j,k}\left({s}_j, {a}_j \right),
	\end{equation}
	which can be easily derived from Theorem \ref{theory1}. In other words, the decomposition of the holistic reward function leads to the decomposition of the value functions. From the aforementioned analysis, the holistic reward function in (\ref{local_reward}) can be decomposed into two sub-reward functions as follows:
	\begin{itemize}
		\item {\textit{Task.~1 reward (CAM message transmission)}}
			\begin{equation}\label{task1reward}
					r_{\ell}^{j,1} = -\left\{\kappa_1{\zeta_j^r}/{\zeta_j}\right\}
			     	-\theta^t_j\kappa^{\prime}_4\mathcal{F}\{p^t_j\}.
			\end{equation}
		\item {\textit{Task.~2 reward (AoI minimization)}}
			\begin{equation}\label{task2reward}
				\begin{aligned}
					r_{\ell}^{j,2} = & -\kappa_2A_j^t + \kappa_3G\left(\mathcal{C}_{j,\Re}^{t} - \mathcal{C}_{j,\Re}^{\text{min}}\right)\\
					&-(1-\theta^t_j)\kappa^{\prime}_4\mathcal{F}\{p^t_j\},
				\end{aligned}
			\end{equation}
	\end{itemize}
	where $ \kappa^{\prime}_4 = \kappa_4 $ in (\ref{local_reward}). In other words we have $$ r_{\ell}^{j}=r_{\ell}^{j,1}+r_{\ell}^{j,2},\quad \forall j\in\mathcal{P}. $$ The general rule of thumb which governs the task decomposition procedure in (\ref{task1reward}) and (\ref{task2reward}) is that there should not be any temporal relationship between the sub-tasks. Due to this reason, we have considered the impact of power control in both the sub-reward functions as it influences both sides. The corresponding algorithm of modified MADDPG with task-decomposition is shown in Algorithm \ref{algorithm2}. In the next section, we will investigate the complexities of the proposed algorithms and assess the growth of the parameter space as the number of agents increases.
	\subsection{Computational Complexity}
	The computational complexity is crucial to the utility of the algorithms. Therefore, we analyze the computational complexity of the two proposed RL methods and compare them with the conventional MADDPG framework, which is extensively applied in the literature. In essence, these analytics depend on two parameters, i) the number of trainable parameters, ii) the total number of neural networks used in the algorithms.
	
	\textit{i) The number of trainable parameters:}\\
	In MADDPG, the centralized Q-functions take all the agents' observations and actions as their input. Concretely, assuming all the agents have identical observation and action
	spaces shown by $ \omega $ and $ \alpha $, the number of trainable parameters for the MADDPG method would be $ \mathcal{O}(n^2(\omega + \alpha)) $, where $ n $ indicates the number of agents. Conversely, the two proposed RL methods incorporate two types of critic networks: the global and local critics. Both the algorithms share a global centralized Q-function whose parametric space increases linearly and is represented as  $ \mathcal{O}(n(\omega + \alpha)) $. On the other hand, the local critics in the two RL methods only take the respective agent's observation and action as their input. Consequently, the parameter space of local critics can be expressed as $ \mathcal{O}(\omega + \alpha) $, and this is similar for both the algorithms.
	
	\textit{ii) The total number of neural networks}: \\
	In MADDPG, the total number of neural networks used during the training process is equal to $ 2\times(n(\underline{1}_{\mathtt{Q}}+\underline{1}_{\mathtt{A}})) $, where the multiplication by 2 is because of the target networks, $ \underline{1}_{\mathtt{Q}} $, and $ \underline{1}_{\mathtt{A}} $ represents that there is one critic and actor network specific for each agent, and $ n $ is the total number of agents. For the modified MADDPG framework, the total number of neural networks is $ 2\times(n(\underline{1}_{\mathtt{Q}_{\ell}}+\underline{1}_{ \mathtt{A}_{\ell} })+\underline{1}_{ \mathtt{Q}_{g} }) $, where $ \underline{1}_{ \mathtt{Q}_{g} } $ indicates the total number of global critics. It is worth mentioning that applying the TD3 algorithm doubles the number of global critics, and in this case the number of neural networks will be $ 2\times(n(\underline{1}_{\mathtt{Q}_{\ell}}+\underline{1}_{ \mathtt{A}_{\ell} })+\underline{2}_{ \mathtt{Q}_{g} }) $. Finally, for the modified MADDPG method with task decomposition, number of neural networks will be 
	$ 2\times(n(\underline{k}_{\mathtt{Q}_{\ell}}+\underline{1}_{ \mathtt{A}_{\ell} })+\underline{1}_{ \mathtt{Q}_{g} }) $, where $ \underline{k}_{\mathtt{Q}_{\ell}} $ indicates that there is a separate Q-function for each agent's decomposed tasks. Similarly, this number will be $ 2\times(n(\underline{k}_{\mathtt{Q}_{\ell}}+\underline{1}_{ \mathtt{A}_{\ell} })+\underline{2}_{ \mathtt{Q}_{g} }) $, whenever the TD3 algorithm is further applied.
		\begin{table}[!t]
		\vspace*{-\baselineskip}
		\caption{Simulation Parameters}
		\label{Table_parameters_s}
		\resizebox{1\columnwidth}{!}{
			\begin{tabular}{p{5.7cm} p{2.7cm}}
				\toprule[1pt]
				\textbf{Vehicular environment parameters} & \textbf{Value}  \\
				\midrule
				Carrier frequency& 2 GHz\\
				Number of RBs & 3\\
				Bandwidth of each RB& 180 kHz\\
				Number of Vehicles &  16 -- 50 \\
				Size of Platoons   &  4 -- 10\\
				Platoons Speed	   &  36 -- 54 km/h\\
				Intra-platoon gap&     5, 15, 25, 35 m\\
				RSU and vehicles antenna heights& 25, 1.5 m\\
				RSU and vehicles antenna gains& 8, 3 dBi\\
				RSU and vehicles receiver noise figure& 5, 9 dB\\
				Vehicles mobility model& Urban case of A.1.2 \cite{3gpp_36_885}\\
				Vehicles maximum power& 30 dBm\\
				Noise power $ \sigma^2 $& -114 dBm\\
				Time constraint of CAM dissemination, $ T $,& 100 ms\\
				CAM message size& 4000 bytes\\
				V2I links\footnotemark[1] minimum capacity requirement, $ \mathcal{C}_{j,\Re}^{\text{min}} $ & 3 bps/Hz\cite{3gpp_37_885}\\
				V2I links path loss model& $ 128.1+37.6\log_{10}(d) $\\
				V2V links\footnotemark[2] path loss model& LOS in WINNER+ B1 Manhattan \cite{bultitude20074}\\
				Shadowing distribution& Log-normal\\
				Shadowing standard deviation for V2I links& 8 dB\\
				Shadowing standard deviation for V2V links& 3 dB\\
				Decorrelation distance for V2I/V2V links&50, 10 m\\ 
				Pathloss/shadowing update for V2I/V2V links&Every 100 ms \cite{3gpp_36_885}\\ 
				Fast fading update for V2I/V2V links&Every 1 ms \cite{3gpp_36_885}\\
				Fast fading& Rayleigh fading\\
				\toprule[1pt]
				\textbf{Neural networks parameters} & \textbf{Value}  \\
				\midrule
				Experience replay buffer size & 50000 \\
				Mini batch size&64\\
				Number/size of local actor networks hidden layers&2 / 1024, 512\\
				Number/size of local critic networks hidden layers&2 / 512, 256\\
				Number/size of global critic hidden layers&3/ 1024, 512, 256\\
				Critic/Actor networks learning rate&0.001/0.0001\\
				Discount factor&0.99\\
				Target networks soft update parameter, $ \tau $&0.0005\\
				Number of episodes&500\\
				Number of iterations per episode&100\\
				\bottomrule[1pt]
		\end{tabular}}
		\footnotemark[1]{Link between PL and RSU}\quad\quad
		\footnotemark[2]{Link between PL and its followers}
	\end{table}

	\section{Performance Evaluation}\label{Result}
	In this section, we assess the simulation results to validate the proposed multi-agent RL based resource allocation for the platooning system. We have built our simulation following the urban case defined in Annex A of \cite{3gpp_36_885}. Major simulation parameters, including the channel models for V2I and V2V links, are listed in Table \ref{Table_parameters_s}. 
	In addition, the Gaussian noise $ \epsilon \sim \mathcal{N}(0, 0.2) $ is added to the actions chosen by the target actor networks, and then clipped to $ (-0.5, 0.5) $ to smooth the target policy, and the policy update delay factor is set to $ d=2 $. Throughout the simulations, the number of available RBs is fixed to three; however, we have varied the number of platoons, the number of PMs, and the intra-platoon spacing to investigate their impact on the system's overall performance. It is worthwhile to mention that we fix the large-scale fading during each episode and let the small-scale fading alter; therefore, the RL algorithm can better procure the underlying fading dynamics. Due to the sensitivity of RL algorithms to the reward quantity, the global reward function in (\ref{global_reward}) is normalized to be consistent with the local reward's range.
	Furthermore, to verify our proposed method's efficiency, three algorithms are adopted as baselines:
	\begin{itemize}
		\item \textbf{Baseline 1: Modified MADDPG}
		
		This algorithm was introduced earlier in Algorithm \ref{algorithm1} and is shown to outperform the conventional MADDPG in \cite{sheikh2020multi}. In this algorithm, the global critic implemented in the RSU motivates cooperation between the platoons by periodically reporting the effectiveness of platoons' chosen action.  The local critics and actor networks are implemented in each platoon and trained with each platoon's local training dataset without the need for other platoons' information.
		
		\item \textbf{Baseline 2: Fully decentralized MADDPG}
		
		To illustrate the global critic's impact on the network performance, in this algorithm, the global critic is not taken into account, and the platoons choose their actions in a fully decentralized way, based on their observations.
		
		\item \textbf{Baseline 3: DDPG}
		
		In this algorithm, the RSU has to acquire all the platoons' observations and actions and is considered a fully centralized algorithm in which all the computations and decision-making have to be performed in the RSU. 
	\end{itemize}
	
		\begin{figure*}[!t]
		\vspace*{-\baselineskip} 
		\centering
		\includegraphics[width=1\textwidth]{./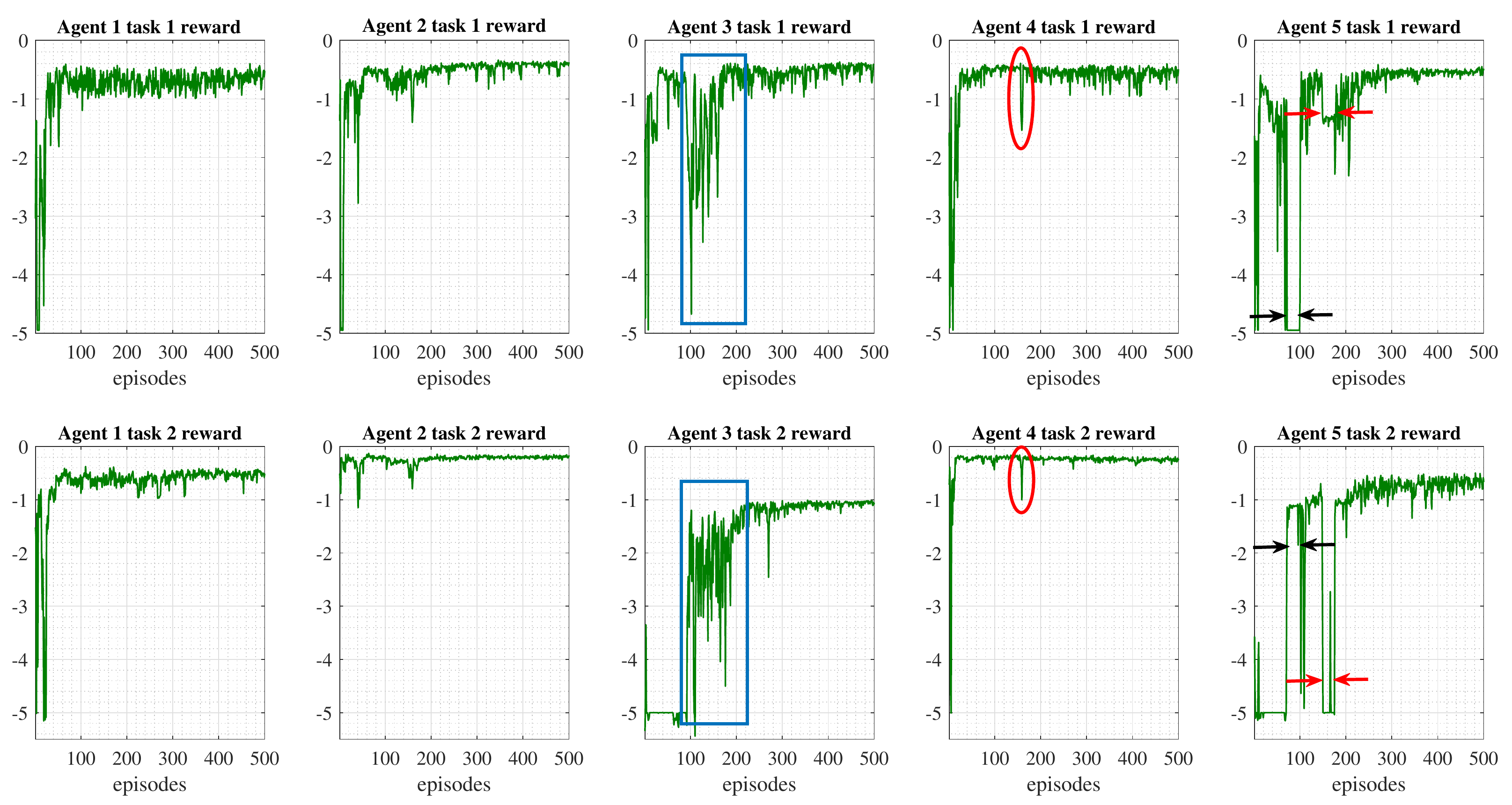}
		\caption{Convergence performance of agents sub-tasks following Algorithm \ref{algorithm2}, intra-platoon gap = 25m, platoon size = 6.} 
		\label{Re1}
	\end{figure*}
	
	\begin{figure*}[!t]
		\centering     
		\subfigure[$ P=5 $, $ N=4, $ intra-platoon gap $ = 5 m $ ]{\label{Re2}\includegraphics[width=.49\textwidth]{./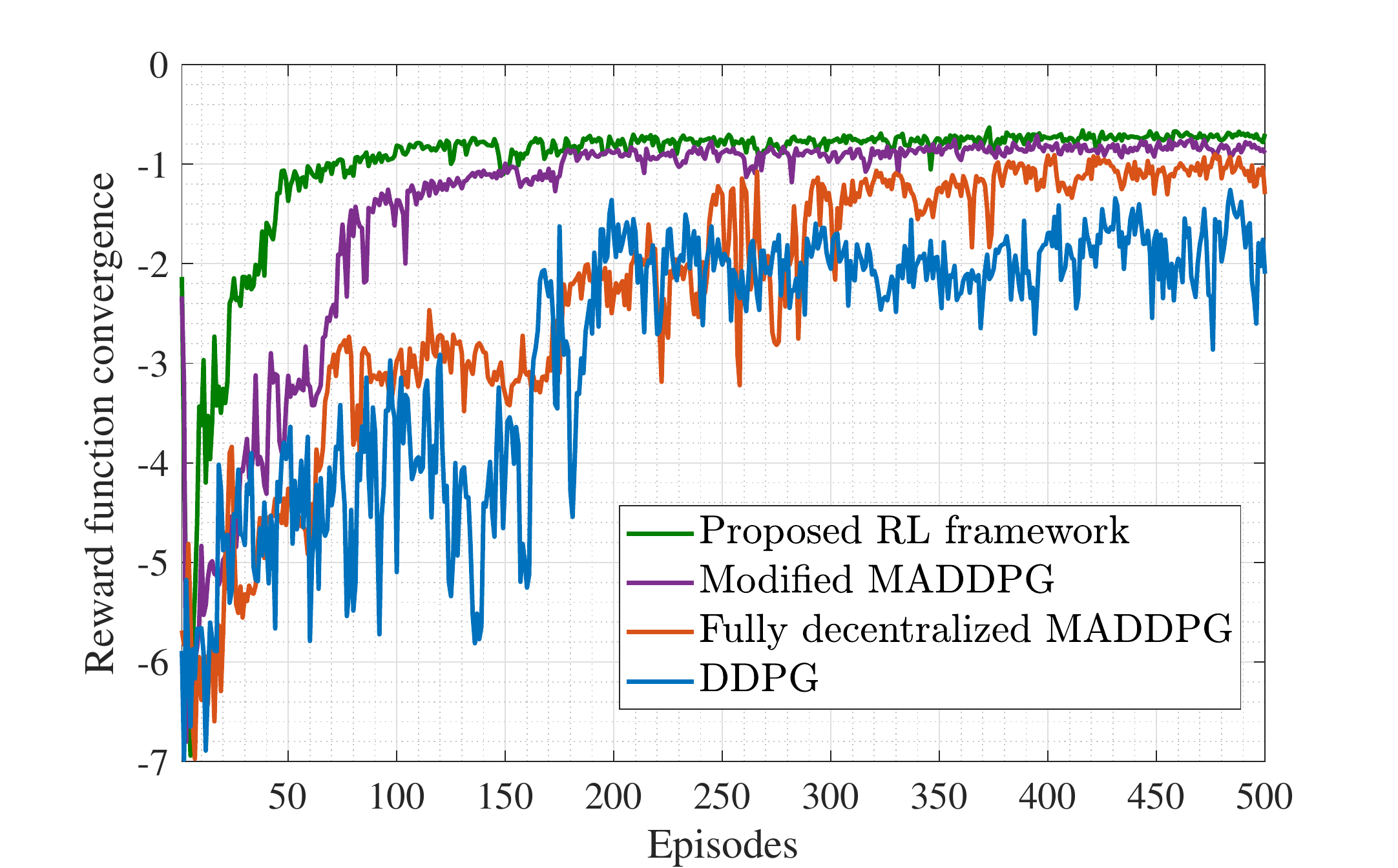}}
		\subfigure[$ P=7 $, $ N=4, $ intra-platoon gap $ = 5 m $]{\label{Re3}\includegraphics[width=.49\textwidth]{./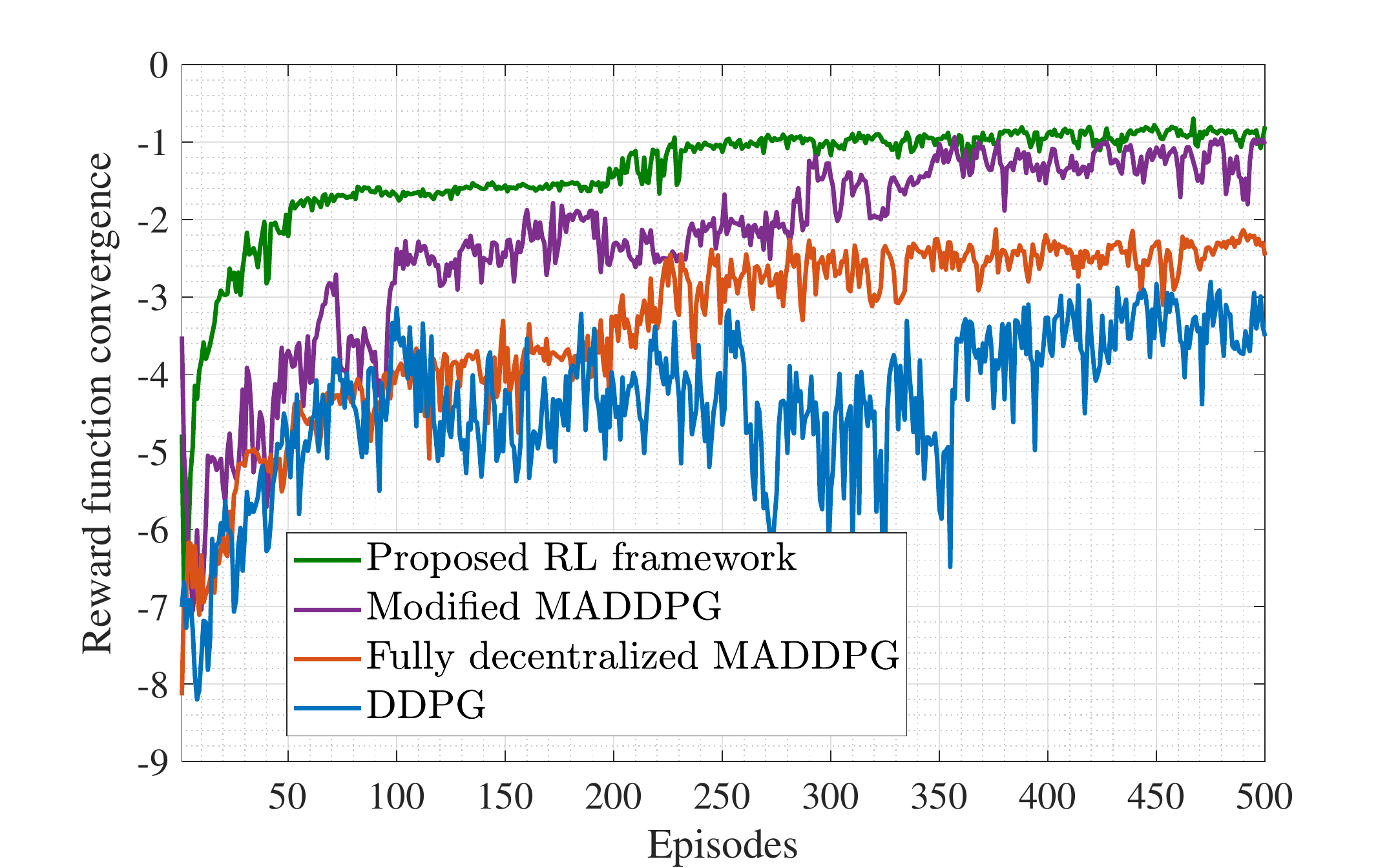}}
		\caption{Comparison of convergence performance}
		\label{REEE}
	\end{figure*}
	\subsection{Simulation Results}
	Fig. \ref{Re1} indicates the convergence of agents sub-tasks when the intra-platoon gap is 25 m, and the number of platoon members at each platoon is 6 (30 vehicles in total). For each agent, we have plotted its sub-tasks reward function. Two notable trends stand out in the figure; first, it can be seen that all the agents have been able to fulfill their associated tasks and maximize the designated reward functions in (\ref{task1reward}) and (\ref{task2reward}) during the $ T $ seconds. Second, the proposed algorithm is quite fast in convergence time. It is observed that for most of the agents, the task-wise reward functions converge in less than 50 episodes. In addition to some fluctuations due to the channel fading that arose by platoons' movements in the environment, the following observations can also be noticed. Since the number of vehicles is large compared to the available resources, there is high contention between the platoons in terms of accessing the available resources. Therefore, the platoons have to share the resources. However, they have to control their power usage jointly with the mode they choose to operate so as not to impose much interference to the other platoons reusing the same resources. This issue is of paramount importance as the platoons choose their actions based on their own observations. The figure implicitly indicates that different components of the system have somehow reached an equilibrium. In other words, not only the global critic has been able to drive the platoons toward selecting proper resources to impose less interference on each other, but also the local critics have motivated their respective platoons to flexibly alter their decisions between inter and intra-platoon modes and meet the predetermined requirements. 	
	
	Fig. \ref{Re1} also reveals that the number of episodes agent three and agent five
	needed for proper convergence is longer compared to the other agents. Starting with agent three, it is observed that during the first 100 episodes, agent three has focused only on task one (CAM dissemination, $\theta^t_j=1$), which has led to a destructive reward for task two. This irregular functionality, which has stemmed from the destructive actions chosen by the agent's actor network, is feedbacked through  (\ref{td3_dec_maddpg_decomposition}) into the agent's actor network to update the policy toward better performance. As the policy starts to improve, agent three, like the other agents, begins to exhibit encouraging signs as it is highlighted in a blue rectangle for both of its tasks. The same procedure applies for agent five. During the first 200 episodes, agent five has focused only on one of its tasks leading to an increase in one task's reward and a substantial decrease in the other one. These fluctuations are demonstrated with red and black arrows for agent five's task-wise reward functions. Aside from what is explained so far, for some episodes, there can be seen some unusual bounces in the reward function, one of which is marked with a red ellipsoid in agent four's reward function, which can be partly related to a phenomenon called catastrophic forgetting in neural networks \cite{mccloskey1989catastrophic}. In general, the proposed MARL method has robust functionality, and yields compelling results even in complex environments consisting of even more vehicles.	
		\begin{figure*}[!t]
		\centering     
		\subfigure[Average Age of Information versus the intra-platoon gap for  P=5 ,  N=4 ]{\label{Re4}\includegraphics[width=.49\textwidth]{./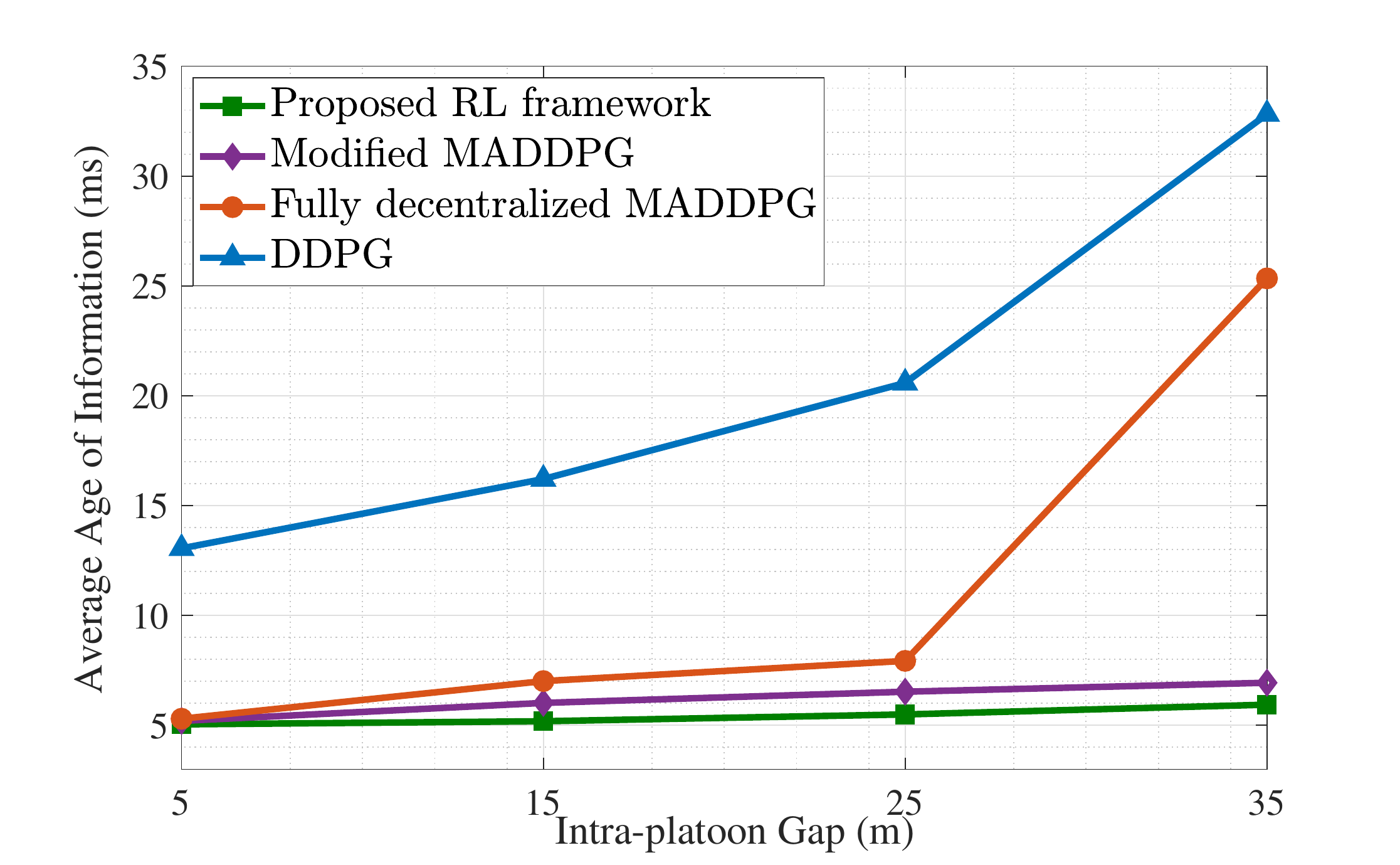}}
		\subfigure[Average CAM message transmission probability versus the intra-platoon gap for  P=5 ,  N=4 ]{\label{Re5}\includegraphics[width=.49\textwidth]{./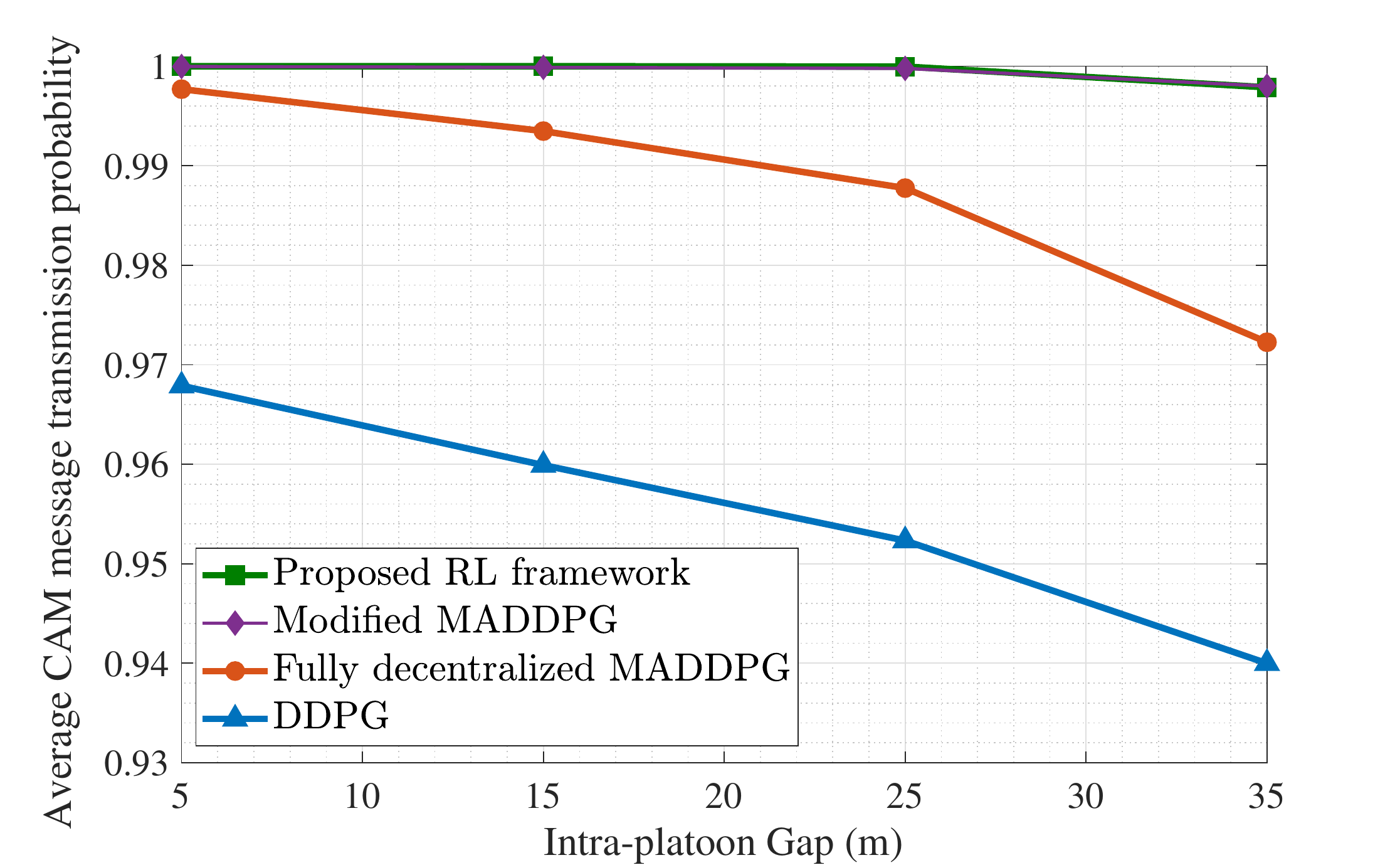}}
		\subfigure[Average Age of Information versus the the number of platoon members for  P=5 , intra-platoon gap =  25 m ]{\label{Re6}\includegraphics[width=.49\textwidth]{./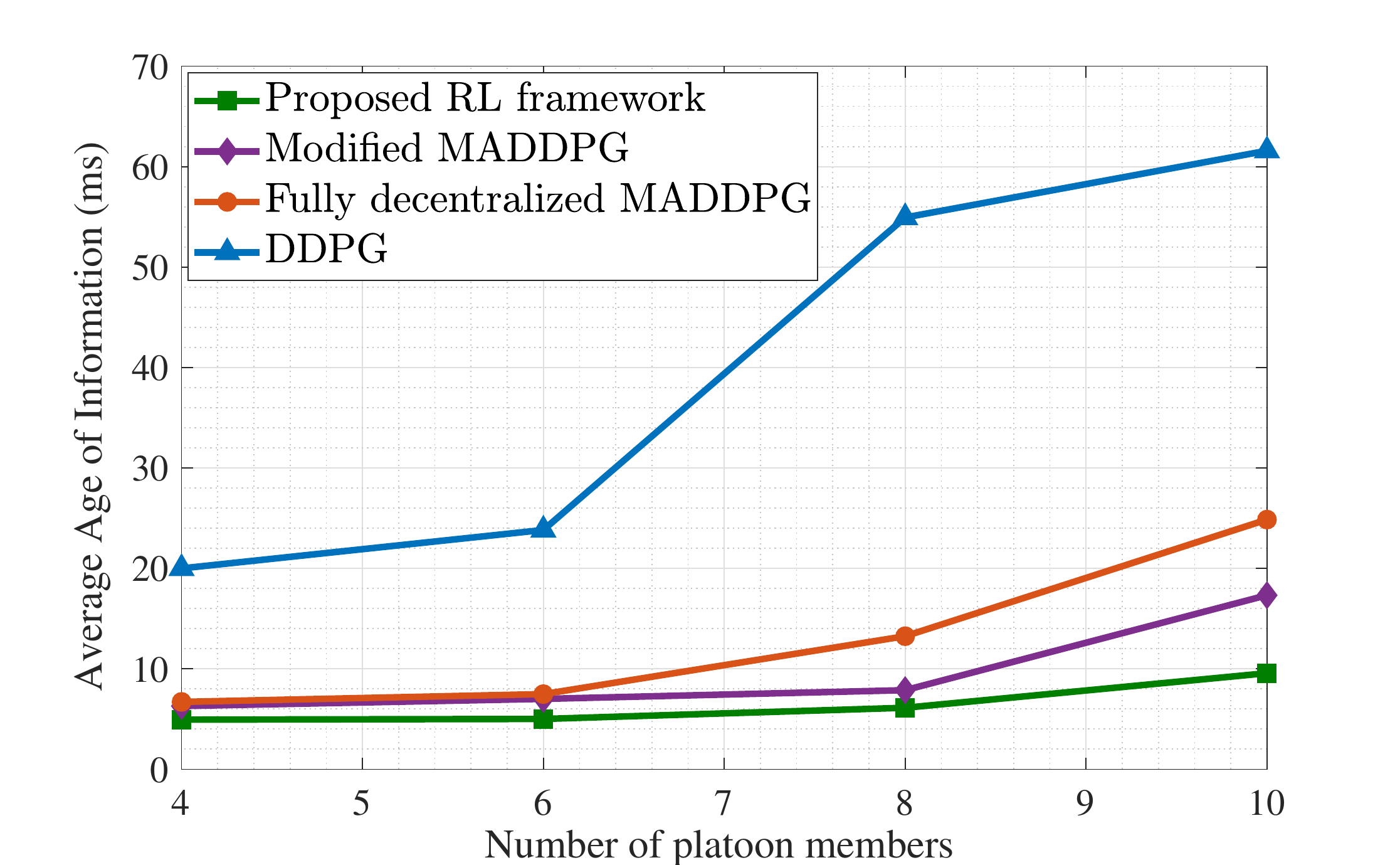}}
		\subfigure[Average CAM message transmission probability versus the number of platoon members for  P=5 , intra-platoon gap =  25 m ]{\label{Re7}\includegraphics[width=.49\textwidth]{./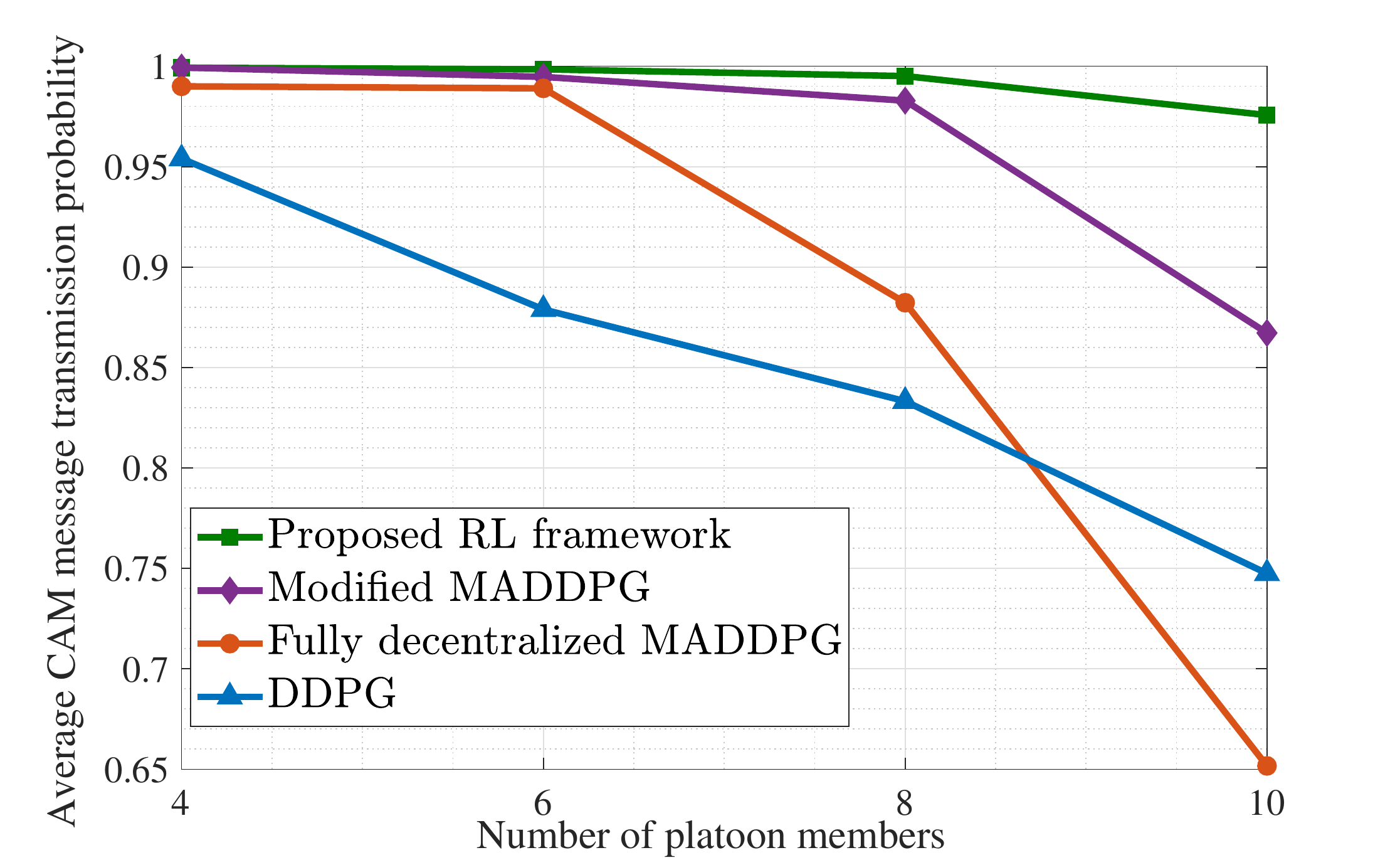}}
		\caption{Comparison of proposed RL algorithms in terms of Average Age of Information and CAM message transmission probability}
		\label{REEE1}
	\end{figure*}
	
	Fig. \ref{REEE} compares the convergence of the four approaches in terms of the average reward performance when the number of platoons is five and seven, respectively. At first glance, the proposed method indisputably outperforms the other three baselines. The DDPG method has the worst reward performance among the considered RL algorithms in both figures. The reason for this weak execution can be related to the DDPG's centralized behavior. Since the DDPG has to take all the agents' observations and actions as input and evaluate how decent the policy has performed for all the agents, it fails to address the agents' individual performance and acts non-stationarily in multi-agent environments. This improper execution is further intensified with the number of agents in Fig. \ref{Re3}.
	
	Regarding the fully decentralized MADDPG, the agents act absolutely oblivious without any knowledge about other existing agents' policies or actions in the environment. This unawareness can lead to increased levels of interference in the system, which will degrade the agents' overall performance. This phenomenon is not very severe when the number of platoons is low, as can be seen from Fig. \ref{Re2}; however, by increasing the number of platoons, its tendency even to perform worse than DDPG is not inconceivable, as observed from Fig. \ref{Re3}. 
	One prominent feature that separates the modified MADDPG and proposed RL framework from the fully decentralized and DDPG, aside from their better reward performance and faster convergence, is their stability and minimal fluctuations during the convergence. We can summarize the primary reasons for this performance gap as follows: 
	i) The proposed frameworks can learn to maximize the individual and global rewards for all the agents simultaneously, leading to improved collaboration between the agents, hence driving towards better performance. ii) The global critic, which is based on TD3, considers the correspondence between function approximation error in both policy and value updates. On the other hand, the DDPG method is highly susceptible to inaccuracies provoked by function approximation errors, making it overfit to narrow peaks in the value estimate. iii) Last but not least, unlike the original implementation of DDPG, which leverages the correlated Ornstein-Uhlenbeck noise, the proposed MARL framework applies an uncorrelated Gaussian noise for exploration.
	Eventually, by analyzing Figs. \ref{Re2} and \ref{Re3}, it is unveiled that the proposed RL algorithm tends to converge to the same quantity even though the vehicle density has increased in the environment, while the other baselines' performance diminishes with the increased load.
	
	Fig. \ref{Re4} illustrates the mean AoI of platoons as a function of the intra-platoon gap when P = 5 and N = 4. From the figure, it can be observed that the AoI quantity rises for all the considered algorithms as the intra-platoon spacing increases. The intuition behind this observation is straightforward.  By increasing the intra-platoon gap, it is perceptible that the channel conditions from PLs to their followers sustain more variations, leading to lower data rates. Accordingly, the PL spends more time transmitting the CAM message to its followers and operating in Mode 1. In the meantime, the PL less frequently interacts with the RSU; therefore, the average AoI increases.
	Nonetheless, our proposed MARL framework performs significantly more reliable than the other baselines, maintaining the average AoI quantity within 5-10 milliseconds range, and guarantees better QoS. Stunningly, the modified MADDPG framework acts close to our proposed algorithm. This behavior is anticipated as both the algorithms leverage the global and local critics simultaneously to learn a global and individual reward. However, there is still a slight performance gap between them due to the task decomposition in our proposed algorithm.
	In comparison, the DDPG acts less stable, and its performance degrades by increasing the intra-platoon spacing. It is also observed that the performance of fully decentralized MADDPG is close to the modified MADDPG, and our proposed algorithm up to 25 meters intra-platoon gap; however, there can be seen a sharp jump in the AoI quantity when the intra-platoon gap rises to 35 meters. This is because, with longer distances between the PL and its followers, the PLs tend to use more power to compensate for the reduced levels of channel gains to guarantee the CAM message transmission to their followers, which inevitably results in severe interference to other platoons, and as these platoons are acting in a fully decentralized way, they cannot discern the appropriate resources to select, hence leading to these sharp changes in the performance metrics.
	The aforementioned analysis is also extendible to results in Fig. \ref{Re6}, which demonstrates the average AoI versus the number of platoon followers. 
	
	Another compelling result can be observed from Figs. \ref{Re5} and \ref{Re7}, which show the CAM message transmission probability. From the figures, the performance metric drops for all the schemes as the intra-platoon gap increases. In conjunction with the observations from Figs. \ref{Re4} and \ref{Re6}, the intuition behind this phenomenon is explicit. However, as Figs. \ref{Re5} and \ref{Re7} suggest the proposed framework is robust against alterations in platoon sizes or intra-platoon spacing variations. The proposed framework maintains a transmission probability of over 99 percent for different platoon sizes when the intra-platoon gap is less than 25 meters, whereas this metric drops significantly for DDPG and fully decentralized MADDPG. We finalize the respective analysis with a critical look at all four figures. By comparing Figs. \ref{Re4} - \ref{Re7}, it is conceivable that the number of vehicles significantly impacts the performance metrics quantity. In Fig. \ref{Re7}, by increasing the number of vehicles up to 30, except DDPG, all the algorithms have shown a similar behavior. As we continue increasing the number of vehicles up to 50, the gap between these algorithms starts to grow. One interesting observation from this figure is that the CAM message transmission probability has dropped to 65 percent for fully decentralized MADPG, even worse than DDPG, which directly relates to its lack of interference management when the number of vehicles is considerable. Nevertheless, care must be taken since these observations are based on the particular setting for the simulation, and additional caution is required when generalizing them.  We can still conclude that our proposed framework in Algorithm \ref{algorithm2} indicated a very robust behavior against the parameter modifications and outperformed the other baselines.
	
	\section{Conclusion}\label{Conclusion}
	In this paper, a MADDPG-based transmission mode selection and resource allocation method was developed for platooning systems, aiming at minimizing the AoI of platoons while guaranteeing the CAM message delivery to PMs. The proposed MARL framework consists of a collaborative setting where a group of PLs simultaneously learns to maximize the collective global reward and individual local reward. 
	Furthermore, we decomposed the agents' holistic reward signal into multiple sub-reward functions based on their sub-tasks and learned them separately.
	Through such a mechanism, we demonstrate that the proposed RRM scheme is significantly robust and effective in encouraging platoons to improve system-level performance, although the PLs independently select their transmission mode, RB, and power levels. Finally, through extensive simulations we verified the effectiveness and performance of the proposed MARL method.
	Future work will carry an in-depth extension of the proposed framework to Non-orthogonal Multiple Access (NOMA) scenarios for the platooning system. Also, examining the spectrum sharing scenarios in vehicular networks is another encouraging direction worth further investigation.
	
	\newpage
	\bibliographystyle{IEEEtran}
	\bibliography{References}

\end{document}